\newtheorem{theorem}{Theorem}
\newtheorem{lemma}{Lemma}
\newtheorem{proposition}{Proposition}
\newenvironment{proof}{{\bf Proof:}}{\hfill\rule{2mm}{2mm}\\}
\newcommand{\ec}{\textsc{EC}\xspace}
\newcommand{\mec}{\textsc{EC($w$)}\xspace}
\newcommand{\bec}{\textsc{EC($b$)}\xspace}
\newcommand{\bmec}{\textsc{EC($w,b$)}\xspace}
\newcommand{\vc}{\textsc{VC}\xspace}
\newcommand{\mvc}{\textsc{VC($w$)}\xspace}
\newcommand{\bvc}{\textsc{VC($b$)}\xspace}
\newcommand{\bmvc}{\textsc{VC($w,b$)}\xspace}
\newcommand{\blvc}{\textsc{VC($\phi,b$)}\xspace}
\newcommand{\blec}{\textsc{EC($\phi,b$)}\xspace}
\newcommand{\algo}[1]{Algorithm \textsc{#1}}
\title{Bounded Max-Colorings of Graphs}
\author{
E. Bampis\thanks{IBISC, CNRS FRE 3190, Universit\'{e} d' \'{E}vry, France, \texttt{bampis@ibisc.univ-evry.fr}}
\and A. Kononov\thanks{Sobolev Institute of Mathematics, pr Koptyuga 4, Novosibirsk, Russia, \texttt{alvenko@math.nsc.ru}}
\and G. Lucarelli\thanks{Dept. of Informatics, Athens University of Economics and Business, Greece, \texttt{\{gluc,milis\}@aueb.gr}}
\and I. Milis$^\ddag$
}
\begin{document}

\maketitle

\begin{abstract}
In a bounded max-coloring of a vertex/edge weighted graph, each
color class is of cardinality at most $b$ and of weight equal to
the weight of the heaviest vertex/edge in this class. The bounded
max-vertex/edge-coloring  problems ask for such a coloring
minimizing the sum of all color classes' weights.

In this paper we present complexity results and approximation
algorithms for those problems on general graphs, bipartite graphs
and trees. We first show that both problems are polynomial for
trees, when the number of colors is fixed, and $H_b$ approximable
for general graphs, when the bound $b$ is fixed. For the bounded
max-vertex-coloring problem, we show a $17/11$-approximation
$\>$algorithm for bipartite graphs, a PTAS for trees as well as
for bipartite graphs when $b$ is fixed. For unit weights, we show
that the known $4/3$ lower bound for bipartite graphs is tight by
providing a simple $4/3$ approximation algorithm. For the bounded
max-edge-coloring problem, we prove approximation factors of
$3-2/\sqrt{2b}$, for general graphs, $\min\{e, 3-2/\sqrt{b}\}$,
for bipartite graphs, and $2$, for trees. Furthermore, we show
that this problem is NP-complete even for trees. This is the first
complexity result for max-coloring problems on trees.
\end{abstract}


\section{Introduction}

\label{section:introduction} The {\em bounded max-vertex-coloring}
(resp. {\em bounded max-edge-coloring}) problem takes as input a
graph $G=(V,E)$, a weight function $w: V \rightarrow N$ (resp. $w:
E \rightarrow N$) and an integer $b$; the question of this problem
is to find a  proper vertex- (resp. edge-) coloring of $G$,
$\mathcal{C}=\{C_1,C_2,\dots,C_k\}$, where each color $C_i$, $1
\leq i \leq k$, has weight $w_i = \max \{w(u)~|~ u \in C_i \}$
(resp. $w_i = \max \{w(e)~|~ e \in C_i \}$), cardinality $|C_i|
\leq b$, and the sum of colors' weights, $W=\sum_{i=1}^{k} w_i$,
is minimized.

We shall denote the vertex and edge bounded max-coloring problems
by \bmvc and \bmec, respectively. These problems, without the
presence of the cardinality bound $b$, have been already addressed
in the literature as {\em max-vertex-coloring} \cite{Pemmaraju04}
and {\em max-edge-coloring} \cite{Lucarelli09} problems; here we
denote them by \mvc and \mec, respectively. For unit weights we
get the {\em bounded vertex-coloring} \cite{Baker96} and {\em
bounded edge-coloring} \cite{Alon83} problems, denoted by \bvc and
\bec, respectively.  For both unbounded colors cardinalities and
unit weights, we have the classical {\em vertex-coloring} (\vc)
and {\em edge-coloring} (\ec) problems.\\

\noindent {\bf Motivation.} Max-coloring problems have been well
motivated in the literature. Max-vertex-coloring problems arise in
the management of dedicated memories, organized as buffer pools,
which is the case for wireless protocol stacks like \texttt{GPRS}
or \texttt{3G} \cite{Pemmaraju04,Pemmaraju05}.
Max-edge-coloring problems arise in switch based communication
systems, like \texttt{SS/TDMA} \cite{Bongiovanni81,Kesselman07},
where messages are to be transmitted through direct connections
established by an underlying network. Moreover, max-coloring
problems correspond to scheduling jobs with conflicts
into a batch scheduling environment \cite{Demange07,Finke08}.

In all applications mentioned above, context-related entities
require their service by physical resources for a time interval.
However, there exists in practice a natural constraint on the
number of entities assigned the same resource or different
resources at the same time. Indeed, the number of memory requests
assigned the same buffer is determined by strict deadlines on
their completion times, while the number of messages and jobs
assigned, at the same time, to different channels and machines,
respectively, is bounded by the number of the available resources.
The existence of such a constraint motivates the study of the
bounded max-coloring problems.\\


\noindent \textbf{Related Work.} It is well known that for general
graphs it is NP-hard to approxi- mate the \vc problem within any
constant factor and the \ec problem within a factor less than
$4/3$; for bipartite graphs both problems become polynomial.

The complexity of the \bvc problem (known also as Mutual Exclusion
Sche- duling problem \cite{Baker96}) on special graph classes has
been extensively studied (see \cite{Gardi08} and the references
therein). It is polynomial for trees \cite{Jarvis01}, but
NP-complete for bipartite graphs even for three colors
\cite{Bodlaender95}. This last result implies also a $4/3$
inapproximabilty  bound for the \bvc problem on bipartite graphs.

The \mvc problem is not approximable within a factor less than
$8/7$ even for planar bipartite graphs, unless P=NP
\cite{Demange07,Pemmaraju05}. This bound is tight for general
bipartite graphs, as an $8/7$-approximation algorithm is also
known \cite{deWerra08,Pemmaraju05}. On the other hand, the
complexity of the problem in trees is an open question, while a
PTAS for this case has been presented in
\cite{Pemmaraju05,Escoffier06}. Other results for the \mvc problem
on several graph classes have been also presented in
\cite{Demange07,deWerra08,Pemmaraju04,Pemmaraju05,Escoffier06,Epstein07}.

The \bec problem is polynomial for bipartite graphs
\cite{Bongiovanni81} as well as for general graphs if $b$ is fixed
\cite{Alon83}. Moreover, it is implied by the results in
\cite{Gardi08} that there is a $4/3$ approximation algorithm for
the \bec problem on general graphs.

The \mec problem is not approximable within a factor less than
$7/6$ even for cubic planar bipartite graphs with edge weights
$w(e) \in \{1,2,3\}$, unless P=NP \cite{deWerra08}. A simple
greedy $2$-approximation algorithm for general graphs has been
proposed in \cite{Kesselman07}.
The complexity of the \mec problem on trees remains also open,
while a $3/2$-approximation algorithm has been recently presented
\cite{Lucarelli09}.

Known results for the \bmvc and \bmec problems have been appeared
in the context of batch scheduling for complements of special
graph classes (see e.g. \cite{Finke08}). In this context both
problems have been shown to be polynomial for general graphs and
$b=2$ \cite{Boudhar00}.

In Table \ref{table:results} we summarize the best known results
for bipartite graphs and trees together with our contribution.

\begin{table}[htb]
\label{table:results}
\begin{center}
\begin{tabular}{|l|c|c|c|c|}
\hline
\multirow{2}{*}{Problem} & \multicolumn{2}{c|}{Bipartite graphs}            & \multicolumn{2}{c|}{Trees}                                   \\
\cline{2-5}
        & {~~ Lower Bound ~~}        & {Upper bound}                        & {Lower Bound}        & {Upper bound}                         \\
\hline \hline
\bvc    &  $4/3$ \cite{Bodlaender95} & $\bf 4/3$                            & \multicolumn{2}{c|}{$OPT$ \cite{Jarvis01}}                   \\
\hline
\mvc    & \multicolumn{2}{c|}{$8/7$ \cite{Demange07,deWerra08,Pemmaraju05}} & {open}$^*$           & $PTAS$ \cite{Pemmaraju05,Escoffier06} \\
\hline
\bmvc   & $4/3$ \cite{Bodlaender95}  & $\bf 17/11$                          & {open}$^*$           & $\bf PTAS$                            \\
\hline
\hline
\bec    & \multicolumn{2}{c|}{$OPT$ \cite{Bongiovanni81}}                   & \multicolumn{2}{c|}{$OPT$ \cite{Bongiovanni81}}              \\
\hline
\mec    & $7/6$ \cite{deWerra08}     & $2$ \cite{Kesselman07}               & {open}$^*$           & $3/2$ \cite{Lucarelli09}              \\
\hline
\bmec   & $7/6$ \cite{deWerra08}     & $\bf min\{e,3-2/\sqrt{b},H_b\}^{**}$ & \textbf{NP-complete} & $\bf 2$                               \\
\hline
\end{tabular}
\end{center}
\caption{Known and ours (in bold) approximability results for bounded and/or max coloring problems.
$^*$Even the complexity of the problem is unknown.
$^{**}$We also show a ratio of $\min\{3-2/\sqrt{2b},H_b\}$ for general graphs. In both cases,
the ratio $H_b$ holds only if $b$ is fixed.}
\end{table}

\noindent \textbf{Our results and organization of the paper.}
In this paper we deal with bounded max-coloring problems  on
general graphs, bipartite graphs and trees. Our interest in
bipartite graphs and trees is two-fold. Despite their simplicity,
these classes of graphs are important in their own right both from
a theoretical point of view but also from the applications'
perspective \cite{Kesselman07,Pemmaraju05}.

In the next section, we relate our problems with two well known
problems, namely the {\em list coloring} and the {\em set cover}
problems. We also introduce some useful notation. In  Section 3,
we deal with the \bmvc problem and we give a simple
$2$-approximation algorithm for bipartite graphs. As a byproduct,
we show that this  algorithm becomes a $4/3$-approximation
algorithm for the \bvc problem, which matches the $4/3$
$\>$inapproximability bound. Then, we present a generic scheme
that we show  to be a $17/11$-approximation algorithm for
bipartite graphs, while it becomes a PTAS for  trees as well as
for bipartite graphs when $b$ is a fixed constant. In Section 4,
we deal with the \bmec problem and we present approximation
algorithms of ratios $\min \{3-2/\sqrt{2b},H_b\}$, for general
graphs, and $\min\{e,3-2/\sqrt{b},H_b\}$, for bipartite graphs.
More interestingly, we prove that the \bmec problem on trees is
NP-complete. Given that the complexity question of \mvc, \mec and
\bmvc problems for trees remains open, this is the first
max-coloring problem on trees proven to be NP-hard. Finally, we
propose a $2$-approximation algorithm for the \bmec problem on
trees.


\section{Preliminaries and Notation}

We first establish a relation between our problems and  bounded
list coloring.\\

\noindent {\sc Bounded List Vertex} (resp. {\sc Edge}) {\sc Coloring problem}\\
\noindent \textsc{Instance:} A graph $G=(V,E)$, a set of colors
$\mathcal{C}=\{C_1,C_2,\dots,C_k\}$, a list of colors $\phi(u)
\subseteq \mathcal{C}$ for each $u \in V$ (resp. $\phi(e)
\subseteq \mathcal{C}$ for each $e \in E$), and integers
$b_i,~1 \leq i \leq k$.\\
\noindent \textsc{Question:} Is there a $k$-coloring of $G$ such
that each vertex $u$ (resp. edge $e$) is assigned a color in its
list $\phi(u)$ (resp. $\phi(e)$) and every color $C_i$ is used at
most $b_i$ times?\\

Clearly, the bounded list coloring problems, denoted by VC($\phi,
b_i$) and  EC($\phi, b_i$), are generalizations of the \bvc and
\bec problems, as well as of the \vc and \ec problems,
respectively. In the next theorem we summarize some of the known
results for the VC($\phi, b_i$) and EC($\phi, b_i$) problems that
we shall use in the rest of the paper.

\begin{theorem}
\label{th:list}
\begin{itemize}
\item []
\item[(i)] The VC($\phi, b_i$) problem is NP-complete even for chains, $|\phi(u)| \leq 2$, for all $u \in V$,
and $b_i \leq 5$, $1 \leq i \leq k$ \emph{\cite{Dror99}}.
\item [(ii)] Both VC($\phi, b_i$) and EC($\phi, b_i$) problems are polynomial for trees if the number of colors $k$
is fixed \emph{\cite{Gravier02,deWerra00}}.
\item [(iii)] The VC($\phi, b_i$) problem is polynomial for general graphs if $k=2$ \emph{\cite{Gravier02}}.
\end{itemize}
\end{theorem}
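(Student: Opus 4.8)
Since all three statements are drawn from the literature, my plan is to indicate the proof technique behind each rather than reproduce the full arguments. For part (i), the hardness of VC($\phi, b_i$) on chains, the natural route is a reduction from a strongly restricted packing or satisfiability problem. The chain topology forces the colouring to propagate locally, so lists of size at most two act like Boolean choices whose value is constrained by the neighbour's selection, while the bounds $b_i \le 5$ encode a global counting (capacity) constraint that can be met simultaneously only when the source instance is a yes-instance. The delicate point, and the reason the result is nontrivial, is that list colouring with lists of size two is otherwise polynomial (it reduces to 2-SAT); the NP-completeness arises purely from the interaction of the size-two lists with the cardinality bounds, so the reduction must be engineered so that the capacities are the only binding constraint. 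This is exactly what is carried out in \cite{Dror99}.

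For part (ii), polynomiality on trees with a fixed number $k$ of colours, I would use bottom-up dynamic programming. Root the tree arbitrarily. For VC($\phi, b_i$), the state at a vertex $u$ records the colour assigned to $u$ together with the vector $(n_1,\dots,n_k)$ counting how many times each colour has been used in the subtree rooted at $u$; there are at most $k \cdot \prod_{i=1}^{k}(b_i+1) \le k(n+1)^k$ such states, which is polynomial for fixed $k$. Feasible states are combined across the children of $u$ by a convolution over the count vectors that respects properness (a child may not repeat the colour of $u$) and the global bounds, and the root reports feasibility. For EC($\phi, b_i$) the same scheme applies with edges indexed by their child endpoint, the only change being that properness is imposed among the edges incident to a common vertex rather than between a vertex and its parent. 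Both variants are the content of \cite{Gravier02,deWerra00}.

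For part (iii), the case $k=2$, a proper colouring with two colours can exist only if the graph is bipartite, and within each connected component the two admissible colourings are the two sides of the bipartition and its swap; lists may forbid one of the two, pinning some components. It then remains to choose, for each still-free component, which of its two class-count profiles to adopt so that the total use of colour $1$ and colour $2$ respects $b_1$ and $b_2$. This bounded selection can be cast as an integer flow/matching feasibility instance and solved in polynomial time, as shown in \cite{Gravier02}. I expect the main obstacle overall to be part (i): the others are routine dynamic-programming and flow arguments, whereas the hardness reduction must be set up so that size-two lists on a chain, which by themselves yield a tractable 2-SAT instance, become intractable solely through the capacity bounds.
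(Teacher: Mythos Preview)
The paper does not prove this theorem at all: it is stated as a summary of known results and supported only by the citations \cite{Dror99,Gravier02,deWerra00}. Your proposal therefore goes strictly beyond what the paper offers, supplying proof sketches where the paper gives none. In that sense there is nothing to compare; the ``paper's own proof'' is the empty proof.

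That said, your sketches are sound and match the standard arguments behind the cited results. For (ii), the bottom-up dynamic programme with states recording the colour of the current vertex (or edge) together with the usage vector $(n_1,\dots,n_k)$ is exactly the mechanism in \cite{Gravier02,deWerra00}; the key observation, which you make, is that the number of states is $O(kn^k)$ and hence polynomial for fixed $k$. For (iii), your reduction to a selection problem over connected components is correct; one small remark is that the resulting feasibility question (choose, for each free component, one of two count profiles so that the global bounds $b_1,b_2$ are met) is most directly solved by a one-dimensional DP on the running total of colour~1, which is polynomial because that total is at most $|V|$---the flow/matching formulation you mention also works but is heavier machinery than needed. For (i), your diagnosis that the hardness stems from the interaction of size-two lists with the capacity bounds (and not from the list constraints alone, which would be 2-SAT) is the right intuition behind the reduction in \cite{Dror99}.
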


Using an exhaustive transformation of an instance of the \bmvc and
\bmec problems to an instance of the VC($\phi, b$) and EC($\phi,
b$) problems (where $b=b_i$, $1 \leq i \leq k$), respectively, and
Theorem \ref{th:list}(ii) (\cite{Gravier02,deWerra00}) we get next
proposition.

\begin{proposition} \label{prop:list}
For a fixed number of colors $k$, both the \bmvc and \bmec
problems on trees are polynomial.
\end{proposition}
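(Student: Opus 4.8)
The plan is to reduce the \bmvc (and symmetrically \bmec) problem on a tree with a fixed number of colors $k$ to a polynomially-bounded family of VC($\phi,b$) instances, each solvable in polynomial time by Theorem~\ref{th:list}(ii), and then to select the cheapest feasible one. The key observation is that, because $k$ is fixed, an optimal bounded max-coloring uses at most $k$ distinct color classes, and each class $C_i$ contributes to the objective only through its weight $w_i=\max\{w(u)\mid u\in C_i\}$. Hence the objective is completely determined once we fix, for each of the $k$ colors, a ``weight level'' equal to the weight of its heaviest vertex.

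First I would enumerate the candidate weight profiles. Since every color's weight is realized by some vertex of the tree, each $w_i$ may be assumed to take a value in the set $\{w(u)\mid u\in V\}$, which has at most $n=|V|$ distinct values. A weight profile is thus a choice $(w_1,\dots,w_k)$ from these at most $n$ values, and there are at most $n^k$ such profiles --- polynomially many, since $k$ is fixed. Without loss of generality I would require $w_1\ge w_2\ge\cdots\ge w_k$ to avoid counting permutations, which only shrinks the search space.

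Second, for a \emph{fixed} profile $(w_1,\dots,w_k)$ I would build an instance of VC($\phi,b$) that tests feasibility of a bounded coloring respecting those weight levels. For each vertex $u$, its admissible list is
\[
\phi(u)=\{\,i : 1\le i\le k,\ w_i\ge w(u)\,\},
\]
i.e.\ $u$ may receive color $i$ only if the chosen weight of class $i$ is at least $w(u)$, so that assigning $u$ to $C_i$ never forces $w_i$ above its prescribed value. Each color is given the common cardinality bound $b_i=b$. By construction, a proper list coloring of this instance using colors with these lists and bound $b$ is exactly a proper bounded coloring in which every class $C_i$ has all vertices of weight at most $w_i$, hence objective value at most $\sum_{i=1}^k w_i$; conversely any optimal bounded max-coloring induces, via its own per-class maxima, one of the enumerated profiles for which the corresponding VC($\phi,b$) instance is a YES-instance with cost exactly $\sum_i w_i$. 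By Theorem~\ref{th:list}(ii) each such instance is solvable in polynomial time for trees with $k$ fixed. Returning the minimum of $\sum_{i=1}^k w_i$ over all feasible profiles therefore yields an optimal solution in total time $n^k$ times a polynomial, which is polynomial for fixed $k$.

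The argument for \bmec is identical after transposing vertices to edges: build EC($\phi,b$) instances with $\phi(e)=\{i:w_i\ge w(e)\}$ and invoke the EC($\phi,b_i$) half of Theorem~\ref{th:list}(ii). I expect the only delicate point to be verifying the exact correspondence between feasible profiles and optimal colorings --- in particular arguing that one need not consider weight values outside $\{w(u)\}$ and that a YES-instance of the constructed VC($\phi,b$) really does certify objective value $\sum_i w_i$ rather than something smaller; this is the ``exhaustive transformation'' the proposition alludes to, and it amounts to a clean but careful two-way inclusion between the two feasibility notions.
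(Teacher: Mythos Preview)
Your proposal is correct and follows essentially the same approach as the paper: enumerate all $O(n^k)$ candidate weight profiles for the $k$ colors, translate each profile into a bounded list-coloring instance with $\phi(u)=\{C_i:w_i\ge w(u)\}$ and common bound $b$, solve these instances in polynomial time via Theorem~\ref{th:list}(ii), and return the cheapest feasible profile. The only cosmetic difference is that the paper phrases the enumeration as choosing $k$ vertices (hence $\binom{|V|}{k}$ combinations) rather than $k$ weight values, but the reduction and the appeal to Theorem~\ref{th:list}(ii) are identical.
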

\begin{proof}
We give here the proof for the \bmvc problem; the proof for the
\bmec problem is quite similar. Given a vertex weighted  graph
$G=(V,E)$ we generate all  $\left(|V| \atop k\right)$ possible
combinations for the weights of the $k$ colors. Let $w_1 \geq w_2
\geq \dots \geq w_k$ be the colors' weights in such a combination.
For each one of these combinations we construct an instance of the
\blvc problem on the graph $G$: is there a $k$-coloring of the
vertices of $G$ such that each color is used at most $b$ times and
each vertex $v \in V$ is assigned a color in $\phi(u)=\{C_i: w(u)
\leq w_i,~ 1 \leq i \leq k\}$? A ``yes'' answer to this instance
of the \blvc problem corresponds to a feasible solution for the
\bmvc problem of weight $W=\sum_{i=1}^k w_i$. An optimal solution
to the \bmvc problem corresponds to the combination where $W$ is
minimized.

There are $O(|V|^k)$ combinations of weights to be generated. For
a fixed $k$, by Theorem \ref{th:list}(ii)
(\cite{Gravier02,deWerra00}), the \blvc and \blec problems are
polynomial and the proposition follows.
\end{proof}

\noindent Next proposition is based on a relation between \bmvc
and \bmec problems and the set cover problem.

\begin{proposition}
\label{prop:setcover}
For a fixed bound $b$, there is an $H_b$-approximation
algorithm for both \bmvc and \bmec problems on general graphs.
\end{proposition}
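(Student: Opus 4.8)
The plan is to reduce both problems to the weighted set cover problem, for which the classical greedy algorithm achieves an $H_m$-approximation where $m$ is the maximum set size, and then argue that when $b$ is fixed the relevant sets all have cardinality at most $b$, yielding the $H_b$ bound. First I would describe the ground set and the collection of candidate sets. For \bmvc, the ground set is $V$, and the candidate sets are exactly the independent sets of $G$ of cardinality at most $b$; for \bmec the ground set is $E$ and the candidate sets are the matchings of cardinality at most $b$. Since each color class in a feasible solution is, by definition, a proper (independent set / matching) of size at most $b$, a bounded max-coloring is precisely a partition of the ground set into such sets, and an optimal set cover using these sets gives a valid coloring after discarding duplicated elements (an element covered more than once is simply kept in one set, which only decreases weights and cardinalities, so feasibility is preserved).

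The key modeling step is the cost assigned to each candidate set $S$: I would set its cost to $\max\{w(u) : u \in S\}$ (resp.\ the max edge weight), which is exactly the weight contributed by a color class equal to $S$. Then the weight $W=\sum_i w_i$ of a coloring equals the total cost of the corresponding collection of sets, so minimizing the coloring weight is the same as minimizing the cost of a set cover drawn from this collection. I would then invoke the standard analysis of the greedy set cover algorithm, which repeatedly picks the set minimizing cost per newly-covered element, and whose approximation guarantee is $H_m = \sum_{j=1}^m \tfrac{1}{j}$ with $m$ the maximum cardinality of any set in the collection. Here every set has size at most $b$, so $m \le b$ and the ratio is $H_b$.

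The two points requiring care are \emph{polynomiality} and \emph{tightness of the bound}, and the first of these is the main obstacle. The greedy step requires, at each iteration, finding among all admissible sets (independent sets / matchings of size $\le b$ covering only uncovered elements) the one minimizing cost-per-new-element. The number of such sets is $O(n^b)$, which is polynomial precisely because $b$ is fixed; this is exactly where the hypothesis ``$b$ fixed'' is used, and I would make explicit that for each candidate cost value one can enumerate or otherwise search the $\binom{n}{b}$-many subsets. For the edge case one enumerates subsets of edges of size $\le b$ and checks the matching condition; for the vertex case one checks independence. Finally, I would note that the greedy choice need not pick a set minimizing cost-per-element over a continuum, but since costs take only the $O(n)$ distinct weight values, the greedy rule is well-defined and computable in polynomial time.

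I expect the main subtlety to be verifying that the set-cover optimum coincides with (rather than merely bounds) the coloring optimum, and that the greedy $H_m$ analysis applies with the nonuniform (max-based) costs we have chosen. The classical $H_m$ guarantee holds for arbitrary nonnegative set costs, so no modification of the analysis is needed; the only thing to confirm is that our cost function is exactly additive across a coloring, which follows immediately from the definition $w_i=\max_{u\in C_i} w(u)$. With these observations the proposition follows by applying the greedy set cover algorithm to the instance described above.
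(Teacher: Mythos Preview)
Your proposal is correct and follows essentially the same approach as the paper: reduce to weighted set cover where the universe is $V$ (resp.\ $E$), the sets are all independent sets (resp.\ matchings) of size at most $b$ with cost equal to the maximum element weight, and invoke the $H_b$ guarantee of the greedy set-cover algorithm, which runs in polynomial time since there are only $O(n^b)$ candidate sets when $b$ is fixed. If anything, your treatment is slightly more careful than the paper's, as you explicitly address the cover-versus-partition issue (removing duplicates from a cover only decreases costs and cardinalities), which the paper leaves implicit.
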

\begin{proof}
In the set cover problem, we are given a universe $U$ of elements,
and a  collection, $\mathcal{S}=\{S_1,S_2,\dots,S_m\}$, of subsets
of $U$, each one of positive cost $c_i$, $1 \leq i \leq m$, and we
ask for a minimum cost subset of $\mathcal{S}$ that covers all
elements of $U$. For an instance of the \bmvc (resp. \bmec)
problem on a graph $G=(V,E)$ we consider the set of vertices $V$
(resp. edges $E$) corresponding to the universe set $U$. For each
$j$, $1 \leq j \leq b$, we generate all $\left(|V| \atop j\right)$
(resp. $\left(|E| \atop j\right)$) possible  subsets of
cardinality $j$ of vertices (resp. edges) of $G$. From all these
subsets we get rid those containing adjacent vertices (resp.
edges) and we consider the rest corresponding  to the set
$\mathcal{S}$. For each such subset $S_i \in S$ we set
$c_i=\max\{w(u)|u \in S_i\}$ (resp. $c_i=\max\{w(e)|e \in S_i\}$).

The cardinality of $S$ is $O(|V|^b)$, since $b$ is  $O(|V|)$, and
as an $H_b$-approximation algorithm is known for the set cover
problem \cite{Chvatal79}, the proposition follows.
\end{proof}

\noindent \textbf{Our notation.} Given a set $S$ and a positive
integer weight $w(s)$ for every element $s \in S$, we denote by
$\langle S \rangle = \langle s_1,s_2,\dots,s_{|S|} \rangle$ an
ordering of $S$ such that $w(s_1) \geq w(s_2) \geq \dots \geq
w(s_{|S|})$. For such an ordering of $S$ and a positive integer
$b$, let $k_S =\lceil\frac{|S|}{b}\rceil$. We define the
\emph{ordered $b$-partition} of $S$, denoted by $\mathcal{P}_S=
\{S_1, S_2, \dots, S_{k_{|S|}} \}$,  to be the partition of $S$
into $k_S$ subsets, such that $S_i
=\{s_j,s_{j+1},\dots,s_{\min\{j+b-1,|S|\}}\}$, $i=1,2,\dots,k$,
$j=(i-1)b+1$. In other words, $S_1$ contains the $b$ heaviest
elements of $S$, $S_2$ contains the next $b$ heaviest elements of
$S$ and so on; clearly, $S_{k_{|S|}}$ contains the $|S|~ mod ~b$
lightest elements of $S$.

By $OPT=w_1^*+w_2^*+\dots+w_{k^*}^*$ we denote the weight of an
optimal solution to the \bmvc or \bmec problem, where $w_i^*$, $1
\leq i \leq k^*$, is the weight of the $i$-th color class. By $\Delta$ we
denote the maximum degree of a graph.


\section{Bounded Max-Vertex-Coloring}
\label{section:bmvc}

In this section we first present a simple $2$-approximation
algorithm for the \bmvc problem on bipartite graphs. The
unweighted variant of this algorithm gives a $\frac{4}{3}$
approximation ratio for the \bvc problem on bipartite graphs,
which closes the approximability question for this case. Then, we
give a generic scheme which becomes a
$\frac{17}{11}$-approximation algorithm for bipartite graphs, a
PTAS for bipartite graphs and fixed $b$, as well as a PTAS for
trees. Recall also that by Proposition \ref{prop:setcover} there
is an $H_b$ approximation ratio for general graphs, if $b$ is
fixed.

\subsection{A simple split algorithm}

Let $G=(U \cup V,E)$, $|U \cup V|=n$, be a vertex weighted
bipartite graph. Our first algorithm colors the vertices of each
class of $G$ separately, by finding the ordered $b$-partitions of
classes $U$ and $V$. For the minimum number of colors $k^*$ it
holds that $k^* \geq \lceil\frac{|U|+|V|}{b}\rceil$ and,
therefore, $k=\lceil\frac{|U|}{b}\rceil+\lceil\frac{|V|}{b}\rceil
\leq \lceil\frac{|U|+|V|}{b}\rceil+1 \leq k^*+1$.\\

\noindent \algo{Split} \texttt{
\\1. Let $\mathcal{P}_U = \{U_1,U_2,\dots,U_{k_U}\}$ be the ordered $b$-partition of $U$;
\\2. Let $\mathcal{P}_V = \{V_1,V_2,\dots,V_{k_V}\}$ be the ordered $b$-partition of $V$;
\\3. Return the coloring $C= \mathcal{P}_U  \cup \mathcal{P}_V$;
}

\begin{theorem}
\label{theorem:bvmc} \algo{Split} returns a solution of weight $W
\leq 2 \cdot w_1^* +w_2^* + \dots + w_{k^*}^* \leq 2 \cdot OPT$
for the \bmvc problem in bipartite graphs.
\end{theorem}
\begin{proof}
Let $\langle C \rangle = \langle C_1,C_2,\dots,C_{k}\rangle$ be
the colors constructed by \algo{Split}, that is $w_1 \geq w_2 \geq
\dots \geq w_k$. Assume, w.l.o.g., that $U_x$, $1 \leq x \leq
k_U$, is the $i-th$ color in $\langle C \rangle$. Let also $u$ be
the heaviest vertex of $U_x$, that is $w(u)=w_i$.

The ordered $b$-partition of $U$ and $V$ implies that colors
$U_1,U_2,\dots,U_{x-1}$ and colors $V_1,V_2,\dots,V_y$, $y=i-x$,
appear before color $U_x$ in $\langle C \rangle$. Then, all $(x-1)
\cdot b$ vertices of colors $U_1,U_2,\dots,U_{x-1}$ are of weight
at least $w(v)$. Also, all $(y-1)\cdot b$ vertices of colors
$V_1,V_2,\dots,V_{y-1}$, are of weight at least the weight of the
heaviest vertex of color $V_y$ which is at least $w(v)$.
Therefore, there are in $G$ at least $(x-1) \cdot b + [(y-1) \cdot
b + 1] + 1=(x+y-2)\cdot b +2 = (i-2)\cdot b +2 $ vertices of
weight at least $w(v)=w_i$. In an optimal solution, these vertices
belong into at least $\lceil\frac{(i-2) \cdot b + 2}{b}\rceil =
(i-1)$ colors, each one of weight at least $w_i$. Hence,
$w_{i-1}^* \geq w_i,~ 2 \leq i \leq k$. Clearly, $w_1 = w_1^*$,
since both are equal to the weight of the heaviest vertex of the
graph, and as $k \leq k^*+1$, we obtain

\begin{center}
$W = \sum_{i=1}^k w_i
 = w_1^* + \sum_{i=2}^k w_i
 \leq w_1^* + \sum_{i=1}^{k-1} w_i^*
 \leq  w_1^* + \sum_{i=1}^{k^*} w_i^*$\\
 $= 2 \cdot w_1^* + w_2^* + \dots + w_{k^*}^*
 \leq 2 \cdot OPT$.
\end{center}
\end{proof}

The complexity of \algo{Split} is dominated by the sorting needed
to obtain the ordered $b$-partitions  of $U$ and $V$ in Lines 1
and 2, that is $O(n \cdot \log n)$.\\


\noindent \algo{Split} applies also to the \bvc problem on
bipartite graphs. Moreover, the absence of weights in the \bvc
problem allows a tight analysis with respect to the $\frac{4}{3}$
inapproximabilty bound.

\begin{theorem}
\label{theorem:bvc} There is a $\frac{4}{3}$-approximation
algorithm for the \bvc problem on bipartite graphs.
\end{theorem}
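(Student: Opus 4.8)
The plan is to show that \algo{Split} is already essentially optimal and to patch only the cases of very small optima. In the unit-weight setting every color class has weight $1$, so the objective of a coloring equals the number of colors it uses; thus $OPT=k^*$ and the coloring returned by \algo{Split} has value $k=\lceil |U|/b\rceil+\lceil |V|/b\rceil$. As observed before Theorem~\ref{theorem:bvmc}, $k\le\lceil(|U|+|V|)/b\rceil+1\le k^*+1$. Consequently $k/k^*\le(k^*+1)/k^*$, and this is at most $\frac43$ exactly when $k^*\ge 3$ (since $3(k^*+1)\le 4k^*\iff k^*\ge 3$). Hence \algo{Split} alone already settles every instance with $k^*\ge 3$, and the only remaining work is the regime $k^*\in\{1,2\}$.

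First I would dispose of the degenerate instances: if $G$ has no edges, then packing its vertices into any $\lceil n/b\rceil$ groups is optimal, so I may assume $G$ has at least one edge, whence $k^*\ge 2$. To separate $k^*=2$ from $k^*\ge 3$, I would note that $k^*\le 2$ holds if and only if the vertices admit a partition into two independent sets, each of cardinality at most $b$. This is precisely the VC($\phi,b_i$) problem with two colors, lists $\phi(u)=\{C_1,C_2\}$ for all $u\in U\cup V$, and bounds $b_1=b_2=b$. By Theorem~\ref{th:list}(iii) this problem is polynomial on general graphs, so I can both \emph{decide} whether $k^*=2$ and, when it is, \emph{output} an optimal $2$-coloring.

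Assembling the pieces, the algorithm is: handle the edgeless case optimally as above; otherwise run the two-color procedure of Theorem~\ref{th:list}(iii); if it returns a feasible $2$-coloring, output it (optimal, ratio $1$); if not, it certifies $k^*\ge 3$, and I return the output of \algo{Split}, for which $k\le k^*+1\le\frac43 k^*$. In all cases the ratio is at most $\frac43$, matching the $\frac43$ inapproximability bound of \cite{Bodlaender95}. The one delicate point is the classification step: distinguishing $k^*\le 2$ from $k^*\ge 3$ must be tractable even though computing $k^*$ exactly on bipartite graphs is NP-hard (indeed, deciding $k^*\le 3$ is already NP-complete). This is exactly what the $k=2$ case of bounded list coloring in Theorem~\ref{th:list}(iii) supplies, and it is the crux on which the tightness of the $\frac43$ bound rests.
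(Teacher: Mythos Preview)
Your proof is correct and follows essentially the same approach as the paper: use \algo{Split} together with the bound $k\le k^*+1$ whenever $k^*\ge 3$, invoke Theorem~\ref{th:list}(iii) to detect and solve the $k^*=2$ case optimally, and handle the edgeless case trivially. The only cosmetic difference is that the paper organizes the case split by the value of $|U|+|V|$ relative to $b$ (namely $|U|+|V|\ge 2b+1$, $b<|U|+|V|\le 2b$, and $|U|+|V|\le b$), whereas you split directly on $k^*$; your formulation is arguably cleaner, but the underlying argument and the use of the two-color list-coloring subroutine are identical.
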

\begin{proof}
Assume, first, that $|U|+|V|\geq 2b+1$. Then, $k^* \geq
\lceil\frac{2b+1}{b}\rceil =3$ and, since $k \leq k^*+1$, we get
$\frac{k}{k^*} \leq \frac{4}{3}$.

Assume, next, that $b < |U|+|V| \leq 2b$. In this case the optimal
solution consists of two or three colors and it is polynomial to
decide between them. In fact, it is polynomial to decide if such a
bipartite graph can be colored with two colors even for the
generalized VC($\phi,b$) problem (see also Theorem
\ref{th:list}(iii) (\cite{Gravier02})).

Assume, finally, that  $|U|+|V| \leq b$. Then, an optimal solution
consists of either two colors (if $E \neq  \emptyset$) or one
color (if $E=\emptyset$).
\end{proof}

\subsection{A generic scheme}

To obtain our scheme we split a bipartite graph $G=(U \cup V,E)$,
$|U \cup V|=n$, into two subgraphs $G_{1,j}$ and $G_{j+1,n}$
induced by the $j$ heaviest and the $n-j$ lightest vertices of
$G$, respectively (by convention, we consider $G_{1,0}$ as an
empty subgraph). Our scheme depends on a parameter $p$ such that
all the vertices of $G$ of weights $w_1^*,w_2^*, \dots, w_{p-1}^*$
are in a subgraph  $G_{1,j}$. This is always possible for some
$j\leq b(p-1)$, since each color of an optimal solution for $G$
contains at most $b$ vertices. In fact, for every $j$, $1 \leq j
\leq b(p-1)$, we obtain a solution for the whole graph by
concatenating an optimal solution of at most $p-1$ colors for
$G_{1,j}$, if there is one, and the solution obtained by
\algo{Split} for $G_{j+1,n}$.\\

\noindent \algo{Scheme}$(p)$ \texttt{
\\1. Let $\langle U \cup V \rangle = \langle u_1,u_2,\dots u_n \rangle$;
\\2. For $j=0,1,\dots, b\cdot(p-1)$ do
\\3. $~$ Split the graph into two vertex induced subgraphs:
\\   $~~~~~$ - $G_{1,j}$ induced by vertices $u_1,u_2,\dots,u_j$
\\   $~~~~~$ - $G_{j+1,n}$ induced by vertices $u_{j+1},u_{j+2},\dots,u_n $
\\4. $~$ If there is a solution for $G_{1,j}$ with at most $p-1$ colors then
\\5. $~~~$ Find an optimal solution  for $G_{1,j}$ with at most $p-1$ colors;
\\6. $~~~$ Run \algo{Split} for $G_{j+1,n}$;
\\7. $~~~$ Concatenate the two solutions found in Lines 5 and 6;
\\8. Return the best solution found;
}

\begin{lemma}
\label{scheme} \algo{Scheme}(p) achieves a $(1+\frac{1}{H_p})$
approximation ratio for the \bmvc problem.
\end{lemma}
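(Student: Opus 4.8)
The plan is to exploit the fact that \algo{Scheme}$(p)$ returns the best solution over all split points $j$, by exhibiting, for each threshold $q \in \{1,2,\dots,p\}$, one particular split point that the loop actually tries and whose associated solution has weight at most $OPT + q\,w_q^*$; a harmonic averaging over $q$ then yields the claimed ratio.

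First I would fix $q$ with $1 \le q \le p$ and let $j_q$ be the number of vertices of weight strictly greater than $w_q^*$. Any such vertex must lie, in an optimal solution, in a color $C_i^*$ with $w_i^* > w_q^*$ and hence with $i \le q-1$; therefore these vertices are exactly the $j_q$ heaviest ones, they are covered by at most $q-1 \le p-1$ optimal colors, and so $j_q \le b(q-1) \le b(p-1)$. Thus $G_{1,j_q}$ is one of the subgraphs examined by the loop and it admits a coloring with at most $p-1$ colors; restricting the optimal solution to $G_{1,j_q}$ shows that the optimal such coloring found in Line~5 has weight at most $w_1^*+\cdots+w_{q-1}^*$.

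Next I would bound \algo{Split} on the complementary subgraph $G_{j_q+1,n}$, all of whose vertices have weight at most $w_q^*$. By Theorem~\ref{theorem:bvmc} its weight is at most the optimal weight of $G_{j_q+1,n}$ plus one extra copy of the heaviest color, i.e. plus at most $w_q^*$. Bounding the optimal weight of $G_{j_q+1,n}$ by the restriction of $OPT$ — in which each of the first $q-1$ colors now contributes at most $w_q^*$ while each remaining color $C_i^*$ contributes at most $w_i^*$ — the two parts combine to give, for this $j_q$, a candidate of weight at most $\sum_{i=1}^{q-1} w_i^* + q\,w_q^* + \sum_{i=q}^{k^*} w_i^* = OPT + q\,w_q^*$. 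The step I expect to require the most care is precisely this accounting: verifying that the overcharges from the doubled color of \algo{Split} and from the first $q-1$ restricted colors collapse to the single clean term $q\,w_q^*$.

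Since the scheme outputs the cheapest solution found, $W \le \min_{1\le q\le p}(OPT + q\,w_q^*) = OPT + \min_{1\le q\le p} q\,w_q^*$. The genuinely clever step is then to bound this minimum by a weighted average with weights $\frac{1}{q}$: since $\min_{q} q\,w_q^* \le \big(\sum_{q=1}^p \frac{1}{q}\cdot q\,w_q^*\big)/\big(\sum_{q=1}^p \frac{1}{q}\big) = \big(\sum_{q=1}^p w_q^*\big)/H_p \le OPT/H_p$, we conclude $W \le \big(1+\frac{1}{H_p}\big)\,OPT$. Finally I would dispose of the degenerate case $k^* \le p-1$, where $n \le b\,k^* \le b(p-1)$, so the loop reaches $j=n$, colors all of $G$ optimally with at most $p-1$ colors, and \algo{Scheme}$(p)$ is exact.
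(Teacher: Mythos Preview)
Your proof is correct and follows essentially the same approach as the paper: for each $q\le p$ you exhibit a split point yielding a candidate of weight at most $OPT+q\,w_q^*$ (using Theorem~\ref{theorem:bvmc} exactly as the paper does), and then combine these $p$ bounds harmonically. The only cosmetic difference is in the combination step---the paper multiplies the $q$-th inequality by $\tfrac{1}{q(H_p+1)}$ and sums, which is algebraically equivalent to your ``$\min \le$ weighted average'' argument---and your explicit handling of the degenerate case $k^*\le p-1$ is a small addition the paper leaves implicit.
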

\begin{proof}
Consider the iteration $j$, $j \leq b\cdot(p-1)$, of the algorithm
where the weight of the heaviest vertex in $G_{j+1,n}$ equals to
the weight of the $i$-th color of an optimal solution, i.e.
$w(u_{j+1})=w_i^*$, $1 \leq i \leq p$.

The vertices of $G_{1,j}$ are a subset of those appeared  in the
$i-1$ heaviest colors of the optimal solution. Thus, an optimal
solution for $G_{1,j}$ is of weight $OPT_{1,j} \leq
w_1^*+w_2^*+\dots+w_{i-1}^*$.

The vertices of $G_{j+1,n}$ are a superset of those appeared in
the $k^*-(i-1)$ lightest colors of the optimal solution. The extra
vertices of  $G_{j+1,n}$ are of weight at most $w_i^*$ and appear
in an optimal solution into at most $i-1$ colors. Thus, an optimal
solution for $G_{j+1,n}$ is of  weight $OPT_{j+1,n} \leq
w_i^*+w_{i+1}^*+\dots+w_{k^*}^* + (i-1) \cdot w_i^* = i \cdot
w_i^*+w_{i+1}^*+\dots+w_{k^*}^*$. By Theorem \ref{theorem:bvmc},
\algo{Split} returns a solution for $G_{j+1,n}$ of weight
$W_{j+1,n} \leq (i+1) \cdot w_i^*+w_{i+1}^*+\dots+w_{k^*}^*$.

Therefore, the solution found in this iteration $j$ for the whole
graph $G$ is of weight $W_i =OPT_{1,j} + W_{j+1,n} \leq w_1^* +
w_2^* + \dots + w_{i-1}^* + (i+1) \cdot w_i^* + w_{i+1}^* + \dots
+ w_{k^*}^*$.

In all the iterations of the algorithm we obtain $p$ such
inequalities for $W$. By multiplying the $i$-th, $1 \leq i \leq
p$, inequality by $\frac{1}{i \cdot (H_p+1)}$ and adding up all of
them, we have $(\sum_{i=1}^p \frac{1}{i \cdot (H_p+1)}) \cdot W
\leq OPT$, that is $\frac{W}{OPT} \leq \frac{H_p+1}{H_p} =
1+\frac{1}{H_p}$.
\end{proof}

The complexity of the \algo{Scheme}(p) is $O(bp(f(p)+n \log n))$,
where $O(f(p))$ is the complexity of checking for the existence of
solutions with at most $p-1$ colors for $G_{1,j}$ and finding an
optimal one among them, while $O(n \log n)$ is the complexity  of
\algo{Split}. \algo{Scheme}(1) coincides with \algo{Split}.
\algo{Scheme}(2) has simply to check if the $j \leq b$ vertices of
$G_{1,j}$ are independent from each other and, therefore, it
derives a $\frac{5}{3}$ approximate solution in polynomial time.
\algo{Scheme}(3) has to check and find, a two color solution for
$G_{1,j}$, if any. This can be done in polynomial time  by Theorem
\ref{th:list}(iii) (\cite{Gravier02}). Thus, \algo{Scheme}(3) is a
polynomial time $\frac{17}{11}$-approximation algorithm for the
\bmvc problem on bipartite graphs.

However, when $p \geq 4$ and $b$ is a part of the instance,
finding an optimal  solution in $G_{1,j}$ is an NP-hard problem
(even for the \bvc problem \cite{Bodlaender95}). Hence, we
consider that $b$ is a fixed constant. In this case, we run an
exhaustive algorithm for finding, if any, an optimal solution in
$G_{1,j}$ of at most $p-1$ colors. The complexity of such an
exhaustive algorithm is  $O((p-1)^{b\cdot(p-1)})$ and thus, the
complexity of \algo{Scheme}$(p)$, $p \geq 4$, becomes
$O(bp^{bp}+n^2\log n)$, since $bp$ is $O(n)$. Choosing
$\epsilon=\frac{1}{H_p}$, we get $p=O(2^{\frac{1}{\epsilon}})$.
Consequently, for fixed $b$, we have a PTAS for the \bmvc problem
on bipartite graphs, that is an approximation ratio of
$1+\frac{1}{H_p}=1+\epsilon$ within
$O(b(2^{\frac{1}{\epsilon}})^{b2^{\frac{1}{\epsilon}}}+n^2\log n)$
time.

Furthermore, in the particular case of trees, checking the
existence of solutions with at most $p-1$ colors for $G_{1,j}$,
and finding an optimal one among them, can be done, by Proposition
\ref{prop:list}, in polynomial time for fixed $p$. The complexity
of our scheme in this case becomes
$O(b2^{\frac{1}{\epsilon}}(n^{2^{\frac{1}{\epsilon}}}+n^2 \log
n))$. Therefore, the following theorem holds.

\begin{theorem}
For the \bmvc problem, \algo{Scheme}(p) is a\\
(i) polynomial time $\frac{17}{11}$-approximation algorithm for bipartite graphs (for $p=3$),\\
(ii) PTAS for bipartite graphs if $b$ is fixed, \\
(iii) PTAS for trees.
\end{theorem}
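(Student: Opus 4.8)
The plan is to read off all three parts directly from Lemma \ref{scheme}, which already guarantees that \algo{Scheme}$(p)$ produces a solution of weight at most $(1+\frac{1}{H_p})\cdot OPT$; the only remaining work is to instantiate the parameter $p$ and to verify the running time in each case, since the approximation guarantee itself comes for free from the lemma.

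For part (i) I would fix $p=3$ and simply compute $H_3 = 1 + \frac{1}{2} + \frac{1}{3} = \frac{11}{6}$, so that $1 + \frac{1}{H_3} = 1 + \frac{6}{11} = \frac{17}{11}$, matching the claimed ratio. It then remains to check that for $p=3$ the scheme runs in polynomial time. The loop ranges over $j = 0,1,\dots,b(p-1) = 2b$, i.e.\ $O(b)$ iterations, and in each iteration Lines 4--5 ask only for an optimal coloring of $G_{1,j}$ using at most $p-1 = 2$ colors. This is precisely the VC($\phi,b$) feasibility question with $k=2$, which is polynomial by Theorem \ref{th:list}(iii); together with the $O(n\log n)$ cost of the embedded \algo{Split}, the whole procedure is polynomial.

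For parts (ii) and (iii) the central observation is that the harmonic numbers diverge, so $\frac{1}{H_p}\to 0$ as $p\to\infty$. Hence, given $\epsilon>0$, I would take $p$ to be the least integer with $\frac{1}{H_p}\leq\epsilon$; since $H_p\approx\ln p$ this yields $p=O(2^{1/\epsilon})$, a constant independent of $n$, and Lemma \ref{scheme} then gives the required $(1+\epsilon)$ ratio. For both parts it thus only remains to argue that, for this fixed $p$, the inner optimization of finding an optimal $\leq(p-1)$-coloring of $G_{1,j}$ (if one exists) is polynomial.

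This running-time analysis is the only genuine work, and it is where the two cases split. In part (ii), with $b$ a fixed constant, the subgraph $G_{1,j}$ has at most $b(p-1)$ vertices — a constant — so I would exhaustively try all assignments of $p-1$ colors to these vertices, checking feasibility and keeping the best, in constant time $O((p-1)^{b(p-1)})$. In part (iii), $b$ may be part of the input but the graph is a tree, so I instead invoke Proposition \ref{prop:list}, which solves \bmvc on trees in polynomial time whenever the number of colors is fixed. Substituting these subroutine costs into the overall bound $O(bp(f(p)+n\log n))$ gives a polynomial running time for each fixed $\epsilon$, establishing both PTAS claims. I do not expect any real obstacle beyond this bookkeeping, precisely because the approximation ratio is entirely supplied by Lemma \ref{scheme}.
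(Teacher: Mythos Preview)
Your proposal is correct and follows essentially the same approach as the paper: invoke Lemma~\ref{scheme} for the ratio, then verify polynomiality of Line~5 in each case via Theorem~\ref{th:list}(iii) for $p=3$, exhaustive enumeration on the constant-size subgraph $G_{1,j}$ when $b$ is fixed, and Proposition~\ref{prop:list} for trees. The paper's argument is identical in structure and in the choice of subroutines; only the running-time bookkeeping is spelled out in slightly more detail there.
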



\section{Bounded Max-Edge-Coloring}
\label{section:bmec}

In this section we deal with the complexity and approximability of
the \bmec problem. We present, first, approximation results for
general and bipartite graphs. Then, we prove that the problem is
NP-complete for trees and we give a 2-approximation algorithm for
this case.

\subsection{General and bipartite graphs}

We first adapt the greedy $2$-approximation algorithm presented in
\cite{Kesselman07} for the \mec problem to the \bmec problem.\\

\noindent \algo{Greedy} \texttt{
\\1. Let $\langle E \rangle = \langle e_1, e_2, \dots, e_{|E|}\rangle$;
\\2. For $j=1,2,\dots,|E|$ do
\\3. $~$ Insert edge $e_j$ in the first color of cardinality less than $b$
\\$~~~~~$ which does not contain other edges adjacent to $e_j$;\\
}

The analysis of \algo{Greedy} is based on tight bounds on the
number of colors in a solution to the \bmec problem.

\begin{proposition}
\label{prop:bmec_greedy} \algo{Greedy} achieves approximation
ratios  of $(3-\frac{2}{\sqrt{2b}})$, on general graphs, and
$(3-\frac{2}{\sqrt{b}})$, on bipartite graphs, for the \bmec
problem.
\end{proposition}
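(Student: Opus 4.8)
The plan is to bound the number of colors $k$ that \algo{Greedy} uses and relate it to the structure of an optimal solution. The key observation is the greedy insertion rule: when edge $e_j$ is placed into color $C_t$, every earlier color $C_1,\dots,C_{t-1}$ must have been unavailable, i.e.\ each was either full (contained $b$ edges) or contained an edge adjacent to $e_j$. Since $e_j$ has at most $2(\Delta-1)$ edges adjacent to it in a general graph (and at most $\Delta-1$ on each endpoint), the number of colors blocked by adjacency is bounded, and the remaining blocked colors must be full. First I would make this precise: if $e_j$ lands in color $C_t$, then among $C_1,\dots,C_{t-1}$ at most $2(\Delta-1)$ are blocked by adjacency, so at least $t-1-2(\Delta-1)$ of them are full, each contributing $b$ edges. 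This forces a lower bound on $|E|$ in terms of $t$, hence an upper bound on the largest color index $k$ actually used.

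Next I would use this to compare $W=\sum_{i=1}^k w_i$ with $OPT$. The natural parameterization is to fix a threshold and count, for each weight level, how many colors of \algo{Greedy} carry weight at least that level versus how many an optimal solution must use. Since colors are processed in nonincreasing weight order ($w_1 \geq w_2 \geq \dots \geq w_k$), I expect an inequality of the shape $w_i \leq w_{g(i)}^*$ for a suitable shift function $g$, analogous to the argument in Theorem~\ref{theorem:bvmc}. The two competing lower bounds on $k^*$ are: the cardinality bound $k^* \geq \lceil |E|/b \rceil$, and the degree bound $k^* \geq \Delta$ (since the edges at a maximum-degree vertex need distinct colors). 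Balancing the contribution of the ``first few'' heavy colors (which may be inflated by adjacency blocking, controlled by $\Delta$) against the ``bulk'' colors (controlled by the cardinality bound $b$) is what produces a ratio of the form $3 - 2/\sqrt{2b}$: the $\sqrt{\cdot}$ arises from optimizing over where to split heavy from bulk, trading the $\Delta$-dependent term against the $b$-dependent term and using $\Delta \leq k^*$.

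For the bipartite case the improvement from $\sqrt{2b}$ to $\sqrt{b}$ comes from a sharper adjacency count. In a bipartite graph the edges meeting $e_j=(x,y)$ split cleanly between the two sides, and the relevant blocking count improves because a proper edge-coloring of a bipartite graph uses exactly $\Delta$ colors (K\"onig's theorem), so the degree-based lower bound $k^* \geq \Delta$ is met with equality rather than merely as a bound, and the number of adjacency-blocked colors can be charged more tightly. Concretely I would redo the counting of full-versus-blocked colors with the bipartite-specific bound and re-optimize the split point, which replaces the constant $2$ under the root by $1$.

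The main obstacle I anticipate is the optimization step that yields the $\sqrt{\cdot}$ term rather than the bounding itself: after establishing $W \leq \sum_i w_i$ with $w_i$ controlled by a shifted optimal weight $w_{g(i)}^*$, one must choose the split between heavy colors (bounded via $\Delta \leq k^*$) and bulk colors (bounded via $b$ and $|E| \leq b k^*$) so as to minimize the worst-case ratio, and show the optimal split gives exactly $3 - 2/\sqrt{2b}$ (resp.\ $3 - 2/\sqrt{b}$). Getting the two lower bounds on $k^*$ to combine correctly — and verifying the algebra of the minimization is tight, not just an upper estimate — is the delicate part; the rest is a fairly direct consequence of the greedy rule and the ordering of colors by weight.
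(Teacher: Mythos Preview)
Your overall framework matches the paper's: bound the number of greedy colors in terms of $|E|$, $b$, and $\Delta$; use the two lower bounds $k^*\ge\Delta$ and $k^*\ge\lceil|E|/b\rceil$; then optimize over $\Delta$ to extract the ratio. But the specific upper bound you state is too weak to reach $3-2/\sqrt{2b}$, and your bipartite explanation points at the wrong mechanism.

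Your count ``at most $2(\Delta-1)$ colors are blocked by adjacency, the rest are full'' yields only
\[
k \;\le\; \Big\lceil\frac{|E|}{b}\Big\rceil + (2\Delta-1),
\]
and feeding this into $k^*\ge\max\{\Delta,\lceil|E|/b\rceil\}$ gives a ratio that merely tends to $3$ as $b\to\infty$; no choice of split point recovers the $-2/\sqrt{2b}$ term. What is missing is a second-order correction: the non-full colors do not just contain one adjacent edge each. If $e=(u,v)$ sits in the last color and a previous non-full color $C$ contains an edge at $u$, then $C$ must \emph{also} contain enough edges to block every $v$-edge that was later placed in some other non-full color (otherwise that $v$-edge would have gone into $C$). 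Carrying out this mutual-blocking count between the ``$u$-type'' and ``$v$-type'' non-full colors shows that together they contain at least roughly $\Delta^2/2$ edges, which tightens the bound to
\[
k \;\le\; \Big\lceil\frac{|E|}{b}\Big\rceil - \Big\lceil\frac{\Delta^2}{2b}\Big\rceil + (2\Delta-1).
\]
It is precisely this extra $-\Delta^2/(2b)$ term that, after optimizing over $\Delta$ (the maximum at $\Delta=\sqrt{2b}$), produces $3-2/\sqrt{2b}$.

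For the bipartite case, K\"onig's theorem is not the reason for the improvement; the lower bound $k^*\ge\Delta$ is used identically in both cases. The gain comes from redoing the mutual-blocking count: in a bipartite graph an edge $(p,q)$ in $C$ can block at most \emph{one} edge incident to $v$ (only one of $p,q$ lies on $v$'s side), and the edge $(u,z)\in C$ blocks none (since $z$ is on $v$'s side). This doubles the lower bound on the number of edges in the non-full colors, replacing $\Delta^2/(2b)$ by $\Delta^2/b$, and the optimum shifts to $\Delta=\sqrt{b}$, giving $3-2/\sqrt{b}$.
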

\begin{proof}
We call a solution $\langle C \rangle  = \langle C_1, C_2, \dots,
C_k \rangle$ to the \bmec problem \emph{nice} if each color $C_i$,
$1 \leq i \leq k$, is of cardinality $|C_i|=b$ or $C_i$ is maximal
in the subgraph induced by the edges $E \setminus
\bigcup_{j=1}^{i-1} C_j$. We first bound the number of colors in such a solution. \\

\noindent \textit{Claim.} For the number of colors $k$ in any nice
solution to the \bmec problem it holds that:

$\displaystyle \max\{ \Delta, \lceil\frac{|E|}{b}\rceil \} \leq k
\leq \left\{
 \lceil \frac{|E|}{b}\rceil - \lceil\frac{\Delta^2}{2b}\rceil + (2\Delta-1), \mbox{ for general graphs}
\atop
 \lceil\frac{|E|}{b}\rceil - \lceil\frac{\Delta^2}{b}\rceil + (2\Delta-1), \mbox{ for bipartite graphs}
\right.$\\\\

\noindent The lower bounds follow trivially. For the upper bounds,
let $\langle \mathcal{C} \rangle = \langle C_1,C_2,$ $\dots,C_k
\rangle$ be a nice solution, $e=(u,v)$ be an edge in the last
color $C_k$,  and $E_u$ and $E_v$ be the sets of edges adjacent to
vertices $u$ and $v$, respectively. By the niceness of the
solution $\mathcal{C}$ it follows that edge $e$ does not appear in
any color $C_i$, $1 \leq i \leq k-1$, because $|C_i|=b$ or $C_i$
contains at least one edge in $E_u$ or $E_v$. Let $W, X, Y
\subseteq \{C_1,C_2,\dots,C_{k-1} \}$ such that $W=\{C_i:~
|C_i|=b \}$, $X=\{C_i:~ |C_i|< b$ and $C_i$ contains  an edge $e
\in E_u \}$ and $Y=\{C_i:~ |C_i|< b$ and $C_i$ contains  an edge
$e \in E_v \}$. Let $E_1$ be the set of edges in the colors in $W$
and $E_2= E \setminus E_1$ be the set of edges in the colors in $X
\cup Y \cup \{C_k\}$. Then,  $k = \frac{|E_1|}{b}+x+y+1$, where
$x=|X|$ and $y=|Y|$.

Assume, w.l.o.g., that $X_1 Y_1 X_2 Y_2 \dots X_l Y_l$ is the
order of colors in the nice solution $\langle \mathcal{C}
\rangle$, where  $X_i \subseteq X$, $Y_i \subseteq Y$, $1 \leq i
\leq l$, and $X_1$ is possibly empty. Let $x_i = |X_i|$ and $y_i =
|Y_i|$, $1 \leq i \leq l$.\\

\noindent For general graphs, consider a color $C \in X_i$ and let
$S_i = \bigcup_{j=i}^l Y_j$ and $s_i =\sum_{j=i}^l y_j$. The edge
$(u,z) \in C \cap E_u$ prevents at most one edge $(v,z) \in S_i
\cap E_v$ from being into $C$. Moreover, each other edge $(p,q)
\in C$ prevents at most two edges $(v,p), (v,q) \in S_i \cap E_v$
from being into $C$. As the colors in $S_i$ contain exactly $s_i$
edges from $E_v$ and all of them are prevented from being into
color $C$, it follows that $|C| \geq \lceil \frac{s_i-1}{2}\rceil
+1$. Therefore, there exist at least
$x_i\cdot\lceil\frac{s_i-1}{2}\rceil+x_i$ edges in $X_i$. In a
similar way, by considering  a color $C \in Y_i$ there exist at
least $y_i\cdot\lceil\frac{t_i-1}{2}\rceil+y_i$ edges in $Y_i$,
where $t_i=\sum_{j=i+1}^l x_j$. Summing up these bounds, and
taking into account that $y_l - 1 \geq 0$ (since $Y_l$ is not
empty), it follows that

\begin{center}
\begin{tabular}{rl}
$|E_2|$ & $\displaystyle \geq \sum_{i=1}^l  ( x_i\cdot\lceil\frac{s_i-1}{2}\rceil+x_i)
           + \sum_{i=1}^l  ( y_i\cdot\lceil\frac{t_i-1}{2}\rceil+y_i)$\\
        & $\displaystyle = \frac{x(y-1)-(y_1+y_2+\dots+y_{l-1})}{2} +x+y+1$\\
        & $\displaystyle \geq \frac{x(y-1)-(y_1+y_2+\dots+y_{l-1}+y_l-1)}{2} +x+y+1$\\
        & $\displaystyle = \frac{(x-1)(y-1)}{2} +x+y+1
           \geq \frac{(x+1)(y+1)}{2} + 1
           \geq \lceil\frac{(x+1)(y+1)}{2}\rceil$.
\end{tabular}
\end{center}

Therefore,
 $k = \frac{|E \setminus E_2|}{b}+x+y+1
 \leq \lceil\frac{|E|}{b}\rceil - \lceil\frac{(x+1)(y+1)\rceil}{2b}\rceil + x+y+1$
 If $\Delta \leq 2b$ then this quantity is maximized when
 $x=y=\Delta-1$ and hence  $k \leq \lceil\frac{|E|}{b}\rceil - \lceil\frac{\Delta^2}{2b}\rceil + (2\Delta-1)$.
 If $\Delta > 2b$ then the above quantity is maximized when
 $x=\Delta-1$ and $y=0$ and hence $k \leq \lceil\frac{|E|}{b}\rceil - \lceil\frac{\Delta}{2b}\rceil + \Delta
\leq \lceil\frac{|E|}{b}\rceil - \lceil\frac{\Delta^2}{2b}\rceil +
(2\Delta-1)$.\\

\noindent For bipartite graphs, the proof is similar. The
structure of  a bipartite graph allows a tighter bound on the
number of edges in the colors in $X_i$ and $Y_i$. Consider, again,
a color $C \in X_i$. For the edge $(u,z) \in C \cap E_u$, there is
no edge $(v,z) \in S_i \cap E_v$, while each other edge $(p,q) \in
C$ prevents at most one edge $(v,p)$ or $(v,q)$ in $S_i \cap E_v$
from being into $C$. Thus, $|C| \geq s_i+1$ and, hence, there
exist at least $x_i(s_i+1)$ edges in $X_i$. Similarly there exist
at least $y_i(t_i+1)$ edges in $Y_i$. The rest of the proof is
along the same lines, but using these bounds.\\

\noindent We return, now, to the solution, $\langle \mathcal{C}
\rangle = \langle C_1,C_2,\dots,C_k \rangle$, derived by
\algo{Greedy}. Consider the color $C_i$ and let $e_j$ be the first
edge inserted in $C_i$, i.e. $w_i=w(e_j)$. Let
$E_i=\{e_1,e_2,\dots,e_j\}$, $G_i$ be the subgraph of $G$ induced
by the edges in $E_i$, and $\Delta_i$ be the maximum degree of
$G_i$.

The solution $\langle \mathcal{C} \rangle$ is a nice one, since it
is constructed in a First-Fit manner. Moreover, an optimal
solution can be also easily transformed into a nice one of the
same total weight. For general graphs, by the bounds above, it
follows that \emph{(i)} $i \leq \lceil\frac{|E_i|}{b}\rceil -
\lceil\frac{\Delta_i^2}{2b}\rceil + (2\Delta_i-1)$, and
\emph{(ii)} in an optimal solution  the edges of $G_i$ appear in
at least $i^* \geq \max\{\Delta_i,\lceil\frac{|E_i|}{b}\rceil\}$
colors, each one of weight at least $w_i$. Therefore,
$\frac{i}{i^*} \leq \frac{\lceil\frac{|E_i|}{b}\rceil -
\lceil\frac{\Delta_i^2}{2b}\rceil +
(2\Delta_i-1)}{\max\{\Delta_i,\lceil\frac{|E_i|}{b}\rceil\}  }$.
By distinguish between $\Delta_i \geq \lceil\frac{|E_i|}{b}\rceil$
and $\Delta_i < \lceil\frac{|E_i|}{b}\rceil$  it follows that in
either case $\frac{i}{i^*} \leq 3 -
\frac{\Delta_i^2+2b}{2b\Delta_i}$. This bound is maximized when
$\Delta_i = \sqrt{2b}$, that is $\frac{i}{i^*} \leq 3 -
\frac{2}{\sqrt{2b}}$. Thus, $w_i \leq w_{i^*}^* \leq w_{\lceil i /
(3-\frac{2}{\sqrt{2b}}) \rceil}^*$. Summing up these inequalities
for all $i$'s, ~$1 \leq i \leq k$, we obtain the
$(3-\frac{2}{\sqrt{2b}})$ ratio  for general graphs.

A similar analysis yields the $(3-\frac{2}{\sqrt{b}})$ ratio for
bipartite graphs. We present here an example for which the
algorithm performs a ratio of exactly $3-\frac{2}{\sqrt{b}}$ for
bipartite graphs. There is, also, an analogous example for general
graphs. Consider the bipartite graph shown in Figure
\ref{fig:tightness_bipartite}(a), where $C
>> \epsilon$,  and $b=9$. The weight of the optimal solution shown
in Figure \ref{fig:tightness_bipartite}(b) is $3C+3\epsilon$. The
weight of the solution obtained by \algo{Greedy}, shown in Figure
\ref{fig:tightness_bipartite}(c), is $7C -\epsilon$. Thus, the
ratio for this instance is $\frac{7C-\epsilon}{3C+3\epsilon}
\simeq  \frac{7}{3} = 3 - \frac{2}{\sqrt{9}}$.

\begin{figure}[htb]
\begin{center}
\begin{tabular}{ccc}
\multicolumn{3}{c}{\includegraphics{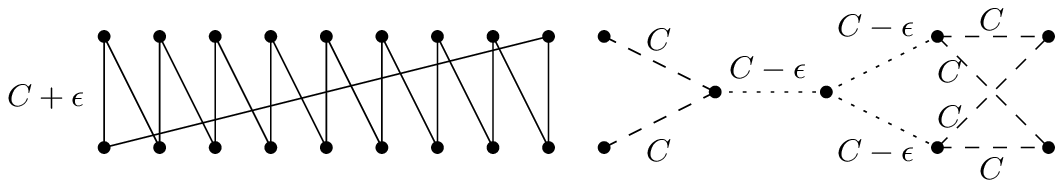}}\\
\multicolumn{3}{c}{(a)}\\
\multicolumn{3}{c}{~}\\
\includegraphics{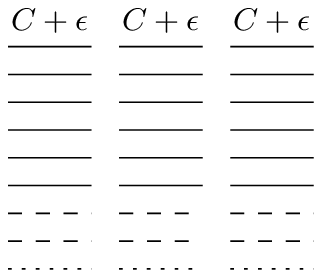} & $~~~~~~~~~~$ & \includegraphics{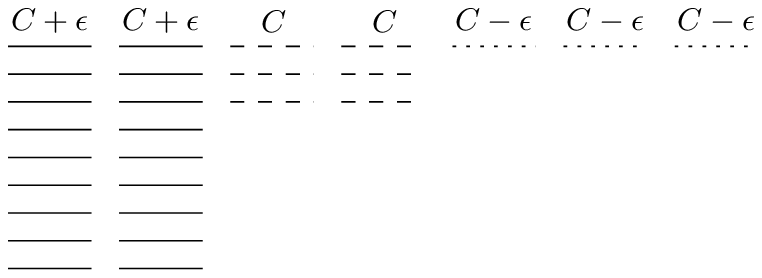}\\
(b)                                                  &              & (c)
\end{tabular}
\end{center}
\caption{(a) An instance of the \bmec problem on bipartite graphs, $C >> \epsilon$, $b=9$.
         (b) An optimal solution.
         (c) The solution obtained by \algo{Greedy}.}
\label{fig:tightness_bipartite}
\end{figure}
\end{proof}

Another approximation result for the \bmec problem is obtained by
exploiting a general framework, presented in \cite{Epstein07},
which allows to convert a $\rho$-approximation algorithm for a
coloring problem into an $e \cdot \rho$-approximation one for the
corresponding max-coloring problem, for hereditary classes of
graphs. In fact, this framework has been presented for such a
conversion from the \vc to the \mvc problem, but it can be easily
seen that this applies also for conversions from the \ec, \bvc and
\bec problems to the \mec, \bmvc and \bmec problems, respectively.
However, this conversion leads to ratios greater than those shown
in Table \ref{table:results} for the \mec and \bmvc problems. For
the \bmec problem on general graphs this approach gives a ratio of
at least $\frac{4}{3} \cdot e > 3$, as the \ec, and hence the
\bec, problem cannot be approximated within a ratio less than
$\frac{4}{3}$. On the other hand, the \bmec problem on bipartite
graphs can be approximated, this way, with a ratio of $e$, as the
\bec problem is polynomial in this case (see Table
\ref{table:results}).

Combining  the discussion above with  Propositions
\ref{prop:bmec_greedy} and \ref{prop:setcover}, it follows that

\begin{theorem}
The \bmec problem can be approximated with a ratio of $\min
\{3-2/\sqrt{2b},H_b\}$, for general graphs, and
$\min\{e,3-2/\sqrt{b},H_b\}$, for bipartite graphs.
\end{theorem}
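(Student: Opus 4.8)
The plan is to recognize the statement as an immediate corollary: for each graph class I would run every applicable approximation algorithm and return the lightest of the solutions produced, so that the overall guarantee is the minimum of the individual ratios. The work therefore reduces to collecting three already-established ingredients and checking that each applies to the stated class.

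First I would invoke Proposition \ref{prop:bmec_greedy}, which gives \algo{Greedy} the ratios $3-2/\sqrt{2b}$ on general graphs and $3-2/\sqrt{b}$ on bipartite graphs; these hold for arbitrary $b$ and need no further argument. Next I would invoke Proposition \ref{prop:setcover} for the $H_b$ ratio on both classes via the set-cover reduction, recording the one caveat that this bound is meaningful only when $b$ is a fixed constant, since the reduction enumerates $O(|E|^b)$ candidate color classes. It is precisely in this regime that $H_b$ can improve on the Greedy bound, which is why it is kept inside the minimum.

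The third ingredient, used only for bipartite graphs, is the hereditary conversion framework of \cite{Epstein07}. The key observation --- already noted in the text preceding the theorem --- is that although that framework is stated as a black-box reduction turning a $\rho$-approximation for \vc into an $e\rho$-approximation for \mvc, the same reduction converts a coloring algorithm for \bec into one for \bmec. Since bipartite graphs form a hereditary class on which \bec is solved exactly ($\rho=1$, by \cite{Bongiovanni81}), this yields an $e$-approximation for \bmec on bipartite graphs. On general graphs the same route gives only $\frac{4}{3}\cdot e > 3$, because \ec and hence \bec are not approximable below $4/3$; the $e$ term is thus absent from the general-graph bound. Taking minima over the applicable ratios then produces exactly $\min\{3-2/\sqrt{2b},H_b\}$ for general graphs and $\min\{e,3-2/\sqrt{b},H_b\}$ for bipartite graphs.

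The one point I would flag as the main obstacle is verifying that the conversion of \cite{Epstein07} genuinely respects the cardinality bound $b$ when passing from \bec to \bmec. Once this is confirmed --- the reduction treats each color class as an atomic object and never merges classes, so the per-class bound $b$ is preserved throughout --- the remainder is pure bookkeeping over bounds that are already proven.
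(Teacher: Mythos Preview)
Your proposal is correct and matches the paper's approach: the theorem is indeed an immediate corollary obtained by combining Proposition~\ref{prop:bmec_greedy}, Proposition~\ref{prop:setcover}, and the $e\rho$ conversion framework of~\cite{Epstein07} (applied to the exact \bec algorithm on bipartite graphs), then taking the minimum of the applicable ratios. The paper likewise notes, without further detail, that the framework of~\cite{Epstein07} extends from \vc/\mvc to \bec/\bmec, so your flagged verification that the cardinality bound is preserved is exactly the one point the paper also leaves to the reader.
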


Note that, the $H_b$ ratio outperforms the other only for $b \leq
5$, for general graphs, and $b=3$, for bipartite graphs, and,
hence, $b$ can be considered as fixed. These ratios are shown in
Table 2, for several values of $b$.

\begin{table}
\label{table:approx}
\begin{center}
\begin{tabular}{| c || l l | l l |}
\hline
~ $b$     ~ & \multicolumn{2}{c |}{General graphs} & \multicolumn{2}{c |}{Bipartite graphs} \\
\hline \hline
~ $3$     ~ & ~ $1.833$ ~~~ & $H_b$           ~    & ~ $1.833$ ~~~ & $H_b$          ~       \\
~ $4$     ~ & ~ $2.083$ ~~~ & $H_b$           ~    & ~ $2.000$ ~~~ & $3-2/\sqrt{b}$ ~       \\
~ $5$     ~ & ~ $2.283$ ~~~ & $H_b$           ~    & ~ $2.106$ ~~~ & $3-2/\sqrt{b}$ ~       \\
~ $6$     ~ & ~ $2.423$ ~~~ & $3-2/\sqrt{2b}$ ~    & ~ $2.184$ ~~~ & $3-2/\sqrt{b}$ ~       \\
~ $\dots$ ~ & ~ $\dots$ ~~~ & $3-2/\sqrt{2b}$ ~    & ~ $\dots$ ~~~ & $3-2/\sqrt{b}$ ~       \\
~ $50$    ~ & ~ $2.800$ ~~~ & $3-2/\sqrt{2b}$ ~    & ~ $2.717$ ~~~ & $3-2/\sqrt{b}$ ~       \\
~ $51$    ~ & ~ $2.802$ ~~~ & $3-2/\sqrt{2b}$ ~    & ~ $2.718$ ~~~ & $e$            ~       \\
~ $\dots$ ~ & ~ $\dots$ ~~~ & $3-2/\sqrt{2b}$ ~    & ~ $\dots$ ~~~ & $e$            ~       \\
\hline
\end{tabular}
\end{center}
\caption{Approximation ratios for the \bmec problem.}
\end{table}


\subsection{NP-completeness for trees}

We prove first that the bounded list edge-coloring, \blec, problem
is NP-complete even if the graph $G=(V,E)$ is a set of chains,
$|\phi(e)| = 2$, for all $e \in E$, and $b=5$. We denote this
problem as EC(chains, $|\phi(e)|=2, ~b=5$).

\begin{proposition}
The EC(chains, $|\phi(e)| = 2, ~b=5$) problem is NP-complete.
\end{proposition}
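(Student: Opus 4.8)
The plan is to establish membership in NP and then give a polynomial reduction from the vertex-coloring problem VC($\phi, b_i$) restricted to chains with $|\phi(u)| \le 2$ and $b_i \le 5$, which is NP-complete by Theorem~\ref{th:list}(i). Membership in NP is immediate: a coloring serves as a certificate, and one verifies in polynomial time that every edge receives a color from its list, that adjacent edges receive distinct colors, and that no color is used more than $b=5$ times. The crux of the hardness argument is the observation that edge-colorings of a chain are the same object as vertex-colorings of a chain: the line graph of a path on $n+1$ vertices is a path on $n$ vertices, and this correspondence is a bijection that respects adjacency, color lists, and per-color usage counts.

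Concretely, given a VC($\phi, b_i$) instance on a set of chains, I would replace each chain with vertex set $v_1, \dots, v_n$ (the objects to be colored) by a chain $P'$ on vertices $u_0, u_1, \dots, u_n$, identifying each vertex $v_i$ with the edge $e_i = u_{i-1}u_i$ and setting $\phi(e_i) = \phi(v_i)$. Two vertices $v_i, v_{i+1}$ are adjacent in the original chain exactly when the edges $e_i, e_{i+1}$ share the vertex $u_i$ in $P'$, so proper vertex-colorings of the original chains correspond one-to-one to proper edge-colorings of the new chains, with identical color multiplicities. The disjoint union of the resulting paths, together with the gadget chains described below, is again a set of chains, as required by the target problem.

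Two mismatches remain to be reconciled, and both are handled by auxiliary disjoint two-edge chains (\emph{gadgets}), each of which, being a pair of adjacent edges, is forced to split its two colors. First, to turn the non-uniform bounds $b_i \le 5$ into the single bound $b=5$, for each color $C_i$ I would add $5-b_i$ gadget chains $f_1 f_2$ with $\phi(f_1)=\phi(f_2)=\{C_i, D_i\}$, where $D_i$ is a fresh color; each such gadget uses $C_i$ exactly once, thereby consuming the extra slots of $C_i$ and leaving exactly $b_i$ of them for the real edges, while $D_i$ is used at most $5-b_i \le 5$ times. Second, to meet the requirement $|\phi(e)|=2$ exactly, every singleton list $\{C\}$ is enlarged to $\{C, D\}$ with a private fresh color $D$; to prevent the real edge from escaping to $D$, I would attach five further gadgets (each using $D$ once together with an additional fresh color) that saturate all five slots of $D$, forcing the real edge back onto $C$.

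The step I expect to be the main obstacle is verifying the \emph{correctness} of the reduction in both directions, and in particular ruling out spurious solutions created by the padding: one must argue that the introduced dummy colors $D_i$ and $D$ can never be used by the real edges nor free up illegitimate capacity for them, so that a valid $5$-bounded edge-coloring of the constructed chains exists if and only if the original VC($\phi, b_i$) instance is a yes-instance. Counting the forced uses of each color through the gadgets, and checking that every dummy color stays confined to its own gadgets, is the delicate bookkeeping on which the equivalence rests.
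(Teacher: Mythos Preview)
Your proof is correct and follows essentially the same three-step route as the paper: the line-graph observation converts bounded list \emph{vertex}-coloring of chains into bounded list \emph{edge}-coloring of chains, then padding gadgets first equalize the per-color bounds to $b=5$ and finally enlarge every singleton list to size two while blocking the added color. The paper's padding is a bit more economical---it uses single independent edges with list $\{C_i\}$ to absorb the $5-b_i$ extra slots, and then just two new global colors $C_{k+1},C_{k+2}$ (with ten independent edges carrying the list $\{C_{k+1},C_{k+2}\}$) to handle all singleton lists at once---whereas you use two-edge chains and a private fresh color per singleton list, but the underlying argument is the same.
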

\begin{proof}
By Theorem \ref{th:list}(i) (\cite{Dror99}), the VC(chains,
$|\phi(v)| \leq 2, ~b_i \leq 5$) problem is NP-complete. Given
that the line-graph of a chain is also a chain, it follows that
the EC(chains, $|\phi(e)| \leq 2, ~b_i \leq 5$) problem is also
NP-complete. The later problem can be easily reduced to the
EC(chains, $|\phi(e)| \leq 2, ~b=5$) problem, where $b_i=b=5$ for
all colors: for every color $C_i$ with $b_i<5$, add $5-b_i$
independent edges with just $C_i$ in their lists. This last
problem reduces to the EC(chains, $|\phi(e)|= 2, ~b=5$) problem,
where $|\phi(e)|=2$ for all edges. This can be done by
transforming an instance of EC(chains, $|\phi(v)| \leq 2, b=5$) as
following: {\em (i)} add two new colors  $C_{k+1}$ and $C_{k+2}$,
both with cardinality bound $b=5$, {\em (ii)} add color $C_{k+1}$
to the list of every edge $e$ with $|\phi(e)|=1$, {\em (iii)} add
ten independent edges and put in their lists both colors $C_{k+1}$
and $C_{k+2}$.
\end{proof}

\begin{theorem}
The \bmec problem on trees is NP-complete.
\end{theorem}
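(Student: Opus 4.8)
The plan is to reduce from the EC(chains, $|\phi(e)|=2,~b=5$) problem, which the preceding proposition establishes is NP-complete. The central difficulty is that the \bmec problem has no notion of color lists: every color is available to every edge, and the ``cost'' of a color is determined by the maximum edge weight it contains rather than by a prescribed index. So the reduction must encode the list constraint $\phi(e)\subseteq\{C_i,C_j\}$ purely through edge weights, while keeping the underlying graph a tree (in fact a forest of chains, which is a tree up to adding connecting edges, or we may argue directly on forests since a forest embeds in a tree). The uniform bound $b=5$ carries over unchanged.

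The encoding I would use is the standard ``weight-levels force color-classes'' trick. Suppose the list instance uses colors $C_1,\dots,C_k$. Assign to color $C_i$ a distinct weight level; concretely, let the target weights be $w_1>w_2>\dots>w_k$ chosen geometrically (say $w_i = M^{\,k-i}$ for a large $M$) so that any sum of a subset of them uniquely identifies the subset, and so that $OPT$ of the \bmec instance equals $w_1+w_2+\dots+w_k$ exactly when each weight level is ``used'' once. For each edge $e$ with $\phi(e)=\{C_i,C_j\}$ (say $i<j$), I would give $e$ weight $w_j$, the \emph{lighter} of its two permitted levels. The intended correspondence is: an edge of weight $w_j$ placed in a color class whose heaviest edge has weight $w_i$ contributes nothing extra to that class's cost precisely when $w_i\ge w_j$, i.e. $i\le j$; to make the class cost exactly $w_i$ we also need at least one edge of weight exactly $w_i$ present to ``anchor'' level $i$. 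I would therefore add, for each $i$, a bundle of $b=5$ independent anchor edges of weight $w_i$ carrying list $\{C_i\}$ in spirit — but since there are no lists, anchoring is enforced by the weights together with the cardinality bound, as the next paragraph explains.

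The key structural step is to prove the equivalence ``the list instance is a yes-instance $\iff$ the \bmec instance has a solution of weight $\le w_1+\dots+w_k$.'' For the forward direction, a feasible list coloring gives a proper edge coloring with at most $b$ edges per color and with each edge of weight $w_j$ landing in a color $C_{i}$, $i\le j$; assigning weight level $i$ to color $C_i$ makes each class cost at most $w_i$, and the anchor edges force it to be exactly $w_i$, so the total is $\sum_i w_i$. For the reverse direction I would argue that any solution of weight $\le\sum_i w_i$ must use exactly $k$ color classes whose costs are precisely $w_1,\dots,w_k$ (this is where the geometric spacing and a counting/volume argument on the anchor edges do the work: there are $k$ distinct heavy levels, each needing its own class because the five anchor edges of one level cannot share a class with a heavier anchor without exceeding $b$, and a cheaper total is impossible since every level must appear). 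Then reading off, from the class containing each original edge, the index $i$ of that class recovers a list coloring: an edge of weight $w_j$ can only sit in a class of cost $w_i\ge w_j$, i.e. $i\le j$, and one checks the weights are chosen so the only admissible indices are exactly $i$ and $j$, matching $\phi(e)$.

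The step I expect to be the main obstacle is the reverse direction — pinning down that a low-weight \bmec solution is forced into exactly the ``clean'' structure of one class per level with the right cardinalities, ruling out cheating solutions that merge levels, split an edge's permitted set, or waste a class. The cardinality bound $b=5$ and the block of anchor edges per color must be tuned so that no class can absorb edges from two different intended levels without either violating $b$ or strictly increasing the total weight; making the anchor multiplicities and the gaps $w_i/w_{i+1}$ large enough to defeat every such deviation, while verifying properness is preserved and the graph stays a disjoint union of chains (a tree after linking components with zero-weight edges, or handled by noting a forest is the relevant class), is the delicate bookkeeping that the full proof must carry out.
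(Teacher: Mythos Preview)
Your reduction starts from the right place (EC on chains with $|\phi(e)|=2$, $b=5$) but the weight-encoding idea has a genuine gap that cannot be patched by tuning the gaps $w_i/w_{i+1}$ or the anchor multiplicities. If an edge $e$ with list $\phi(e)=\{C_i,C_j\}$, $i<j$, is given the single weight $w_j$, then in any \bmec solution whose classes have costs exactly $w_1>\dots>w_k$ the edge $e$ may sit in \emph{any} class of cost $\ge w_j$, i.e.\ in $C_1,\dots,C_j$, not merely in $C_i$ or $C_j$. Your sentence ``one checks the weights are chosen so the only admissible indices are exactly $i$ and $j$'' is where the argument breaks: a single scalar weight cannot distinguish an edge with list $\{C_2,C_5\}$ from one with list $\{C_3,C_5\}$, so no amount of geometric spacing can recover the first coordinate of the pair. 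Cardinality counting cannot rescue this either, since a cheap \bmec solution could legitimately move such edges to any heavier class without changing the total weight, producing a coloring that violates the original lists.

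The paper's proof closes exactly this gap by encoding the list \emph{structurally} rather than numerically. Each original edge $e$ with $\phi(e)=\{C_i,C_j\}$ is replaced by a short chain $e_1e_2e_3$ of unit weight, and to each inner vertex $u',v'$ one attaches $k-2$ star gadgets whose connecting edges carry the $k-2$ weights in $\{1,\dots,k\}\setminus\{i,j\}$. Because any feasible solution of total weight $\sum_{i=1}^k i$ must consist of exactly $k$ classes of weights $1,\dots,k$ (the stars force this), the $k-2$ edges at $u'$ already occupy all colours except $i$ and $j$; properness then leaves $e_1,e_2,e_3$ only the two intended colours. Further colour-gadgets balance frequencies, and the cardinality bound in the \bmec instance is set to a large value $b'=k(k-1)F-2|E|+5+F$, not $5$; the original bound $5$ is recovered at the end by counting how many $e_1/e_3$ edges land in each colour. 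The missing idea in your proposal is precisely this use of adjacency (stars with complementary weight sets) to simulate the list constraint; weights alone do not suffice.
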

\begin{proof}
Our reduction is from EC(chains, $|\phi(e)|= 2, ~b=5$) problem. We
construct an instance of the \bmec problem on a forest
$G'=(V',E')$ as follows.

We replace every edge $e=(u,v) \in E$ with a chain of three edges:
$e_1=(u,u')$, $e_2=(u',v')$ and $e_3=(v',v)$, where
$w(e_1)=w(e_2)=w(e_3)=1$. Moreover, we create $k-|\phi(e)|=k-2$
stars of $k-1$ edges each. We add edges $(u',s_t)$, $1 \leq t \leq
k-2$, between $u'$ and the central vertex $s_t$ of each of these
$k-2$ stars; thus every star has now exactly $k$ edges. Let
$\phi(e)=\{C_i,C_j\}$. The $k-2$ edges $(u',s_t)$ take different
weights in $\{1,2,\dots,k\} \setminus \{i,j\}$. Let $q$ be the
weight taken by an edge $(u',s_t)$. The remaining $k-1$ edges of
the star $t$ take different weights in $\{1,2,\dots,k\} \setminus
\{q\}$. In the same way, we add $k-2$ stars connected to $v'$. In
Figure \ref{fig:bmec1}, is shown the $u'$'s part of this
edge-gadget for $e=(u,v)$. For every edge $e$ of $G$, we add
$2(k-2)$ stars and $2(k-2)k+2$ edges.
\begin{figure}[htb]
\begin{center}
\includegraphics{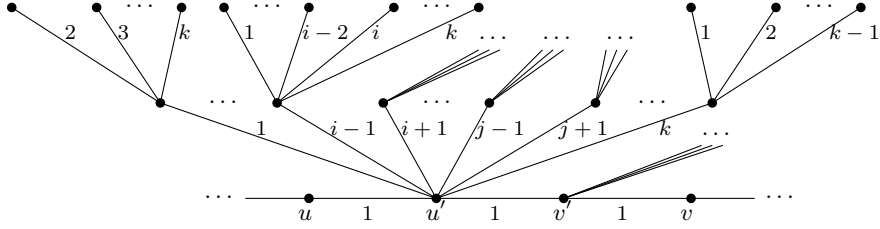}
\end{center}
\caption{The gadget for an edge $e=(u,v)$ with
$\phi(e)=\{C_i,C_j\}$.} \label{fig:bmec1}
\end{figure}

To complete our construction we define $f_i$ to be the frequency
of color $C_i$ in the lists of all edges and $F=\max\{f_i | 1 \leq
i \leq k\}$. For every color $C_i$ we add $F-f_i$ disconnected
copies of the color-gadget shown in Figure \ref{fig:bmec2}.
Such a gadget consists of an edge $e=(x,y)$ and $k-1$ stars with $k-1$ edges each. There are
also edges between one of the endpoints of $e$, say $y$, and the
central vertices of all stars; thus every star has now exactly $k$
edges. The edge $e$ takes weight $i$ and the edges in the stars of such a
color-gadget take weights similarly with those in the stars of an edge-gadget.
For a color $C_i$ we add $(F-f_i)(k-1)$ stars and
$(F-f_i)(k-1)k+(F-f_i)$ edges.
\begin{figure}[htb]
\begin{center}
\includegraphics{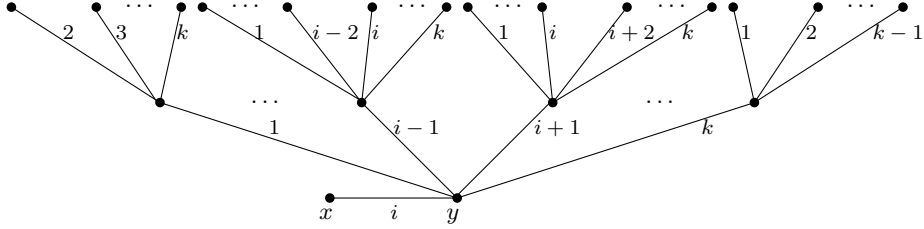}
\end{center}
\caption{A gadget for the color $C_i$.} \label{fig:bmec2}
\end{figure}

The number of stars in the forest $G'$ we have constructed is
$2|E|(k-2)+\sum_{i=1}^k(F-f_i)(k-1) =k(k-1)F-2|E|$, since
$\sum_{i=1}^k f_i=2|E|$. By setting $b'=k(k-1)F-2|E|+5+F$, we
prove that: ``\emph{There is a $k$-coloring for \blec(chains,
$|\phi(e)| = 2, ~b=5$), if and only if, $G'$ has a bounded
max-edge-coloring of total weight $\sum_{i=1}^k i$ such that every
color is used at most $b'$ times}''.

Consider, first, a solution $\mathcal{C}$ to the \blec problem. We
construct a solution $\mathcal{C}'$ for the \bmec problem as
following. Let $e=(u,v) \in E$ be an edge with
$\phi(e)=\{C_i,C_j\}$, which, w.l.o.g., appears in the color $C_i$
of  $\mathcal{C}$. Put the edges $e_1$ and $e_3$ of the
edge-gadget for $e$ in color $C_i'$, while the edge $e_2$ in color
$C_j'$. After doing this for all edges in $E$, each color $C_i'$
contains at most $2 \cdot 5+1 \cdot (f_i-5) = f_i+5$ edges. Next,
put the edges with weight $i$, $1 \leq i \leq k$, from the
$k(k-1)F-2|E|$ stars into $C_i'$. Each color $C_i'$ in
$\mathcal{C}'$ constructed so far contains at most
$k(k-1)F-2|E|+f_i+5=b'-(F-f_i)$ edges and, by the construction of
$G'$, $\mathcal{C}'$ is a proper coloring. In the $F-f_i$
color-gadgets for $C_i$ there are $F-f_i$ remaining $(x,y)$ edges
of weight $i$, which can still be inserted into color $C_i'$.
Thus, we get a solution for the \bmec problem of $k$ colors, each
one of at most $b'$ edges, and total weight $\sum_{i=1}^k i$.

Conversely, consider a solution $\mathcal{C}'$ to the \bmec
problem. $\mathcal{C}'$ consists of exactly $k$ colors of weights
$1,2,\dots,k$, since each star in $G'$ has $k$ edges and each edge
has a different weight in the range $\{1,2,\dots,k\}$. Thus, all
edges of the same weight, say $i$, should belong in the same color
$C_i'$ of $\mathcal{C}'$. Therefore, $C_i'$ contains one edge from
each one of the $k(k-1)F-2|E|$ stars as well as the $F-f_i$
remaining $(x,y)$ edges of the color-gadgets having weight $i$.
Consider, now, the edges of $G'$ corresponding to the edges $e_1$,
$e_2$ and $e_3$ of the edge-gadget for an edge $e$ with
$\phi(e)=\{C_i,C_j\}$. By the construction of $G'$ and the choice
of edge weights, the edges $e_1$, $e_2$ and $e_3$ should appear
into colors $C_i'$ and $C_j'$. Thus, edges  $e_1$ and $e_3$ should
appear, w.l.o.g., into color $C_i'$, while $e_2$ into color
$C_j'$. Therefore, the edge $e \in E$ can be colored by color $C_i
\in \phi(e)$. Finally, a color $C_i'$ contains at most $5$ edges
of type $e_1$ (or $e_3$), corresponding to at most $5$ edges of
$E$; otherwise $|C_i'| \geq k(k-1)F-2|E|+(F-f_i)+(2 \cdot 6 + 1
\cdot (f_i-6))>b'$, a contradiction.

To complete our proof for the \bmec problem on trees, let $p$ be
the number of trees in $G'$. We add a set of $p-1$ edges of weight
$\epsilon < 1$ to transform the forest $G'$ into a single tree
$T$. This can be easily done since every tree of $G'$ has at least
two vertices. By keeping the same bound $b'$, it is easy to see
that there is a solution for the \bmec problem on $G'$ of weight
$\sum_{i=1}^k i$, if and only if, there is a solution for the
\bmec problem on $T$ whose weight is equal to $\sum_{i=1}^k
i+\lceil\frac{p-1}{b'}\rceil \epsilon$.
\end{proof}

\subsection{A 2-approximation algorithm for trees}

In \cite{Lucarelli09} a 2-approximation algorithm for the \mec
problem on trees has been presented,
 which is also exploited to derive a ratio of 3/2 for that problem.  This algorithm yields
to a solution of $\Delta$ colors, $\mathcal{M} =
\{M_1,M_2,\dots,M_{\Delta} \}$. Starting from this solution we
obtain a solution to the \bmec problem by finding the ordered
$b$-partition of each color in $\mathcal{M}$. For the sake of
completeness we give below the whole algorithm.\\

\noindent \algo{Convert} \texttt{
\\1. Let $T_r$ be the tree rooted in an arbitrary vertex $r$;
\\2. For each vertex $v$ in pre-order traversal of $T_r$ do
\\3. $~$ Let $\langle E_v \rangle = \langle e_1, e_2, \dots, e_{d(v)}\rangle$ be the edges adjacent to $v$,
\\ $~~~~~$ and $(v,p)$ be the edge from $v$, $v \not \equiv r$, to its parent;
\\4. $~$ Using ordering $\langle E_v \rangle$, insert each edge in $E_v$, but $(v,p)$,
\\ $~~~~~$ into the first matching which does not contain an edge in $E_v$;
\\5. Let $\mathcal{M} =\{M_1,M_2,\dots,M_{\Delta}\}$ be the colors constructed;
\\6. For $i=1$ to $\Delta$ do
\\7. $~$ Let $\mathcal{P}_{M_i}=\{M^i_1,M^i_2,\dots,M^i_{k_i}\}$ be the ordered $b$-partition of $\langle M_i \rangle$;
\\8. Return a solution $\langle\mathcal{C}\rangle = \langle C_1,C_2,\dots,C_k \rangle$, $\mathcal{C} = \bigcup_{i=1}^{\Delta} \mathcal{P}_{M_i}$;
}

\begin{theorem}
\algo{Convert} is a $2$-approximation one for the \bmec problem on
trees.
\end{theorem}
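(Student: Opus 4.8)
The plan is to analyse \algo{Convert} by a charging argument in the spirit of Theorem~\ref{theorem:bvmc} and Proposition~\ref{prop:bmec_greedy}. Let $\mathcal{M}=\{M_1,\dots,M_\Delta\}$ be the $\Delta$ matchings produced by the \mec algorithm of \cite{Lucarelli09} in Lines~1--5, and let $\langle\mathcal{C}\rangle=\langle C_1,\dots,C_k\rangle$, with $w_1\ge\cdots\ge w_k$, be the final colours obtained by concatenating the ordered $b$-partitions of the $M_i$'s. Since the ordered $b$-partition of a single matching may increase its total weight, one cannot simply invoke the $2$-approximation guarantee of \cite{Lucarelli09} as a black box (the extra splitting weight would still have to be absorbed). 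Instead I would argue directly that $w_i\le w^*_{\lceil i/2\rceil}$ for every $i$, which immediately yields $W=\sum_{i=1}^k w_i\le\sum_{i=1}^k w^*_{\lceil i/2\rceil}\le 2\sum_{j=1}^{k^*}w^*_j=2\cdot OPT$, because each optimal weight $w^*_j$ is charged only by the two ranks $i\in\{2j-1,2j\}$.

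Fix a colour $C_i$, let $e$ be its heaviest edge, so $w_i=w(e)$, and let $H=\{e'\in E: w(e')\ge w_i\}$ be the set of heavy edges, inducing a sub-forest of $N=|H|$ edges and maximum degree $D$. First I would record the lower bound on the optimum: in any bounded max-edge-colouring every colour containing a heavy edge has weight at least $w_i$, such a colour is a matching of at most $b$ edges, and some vertex meets $D$ heavy edges; hence $OPT$ uses at least $\max\{D,\lceil N/b\rceil\}$ colours of weight at least $w_i$, that is $w^*_{\max\{D,\lceil N/b\rceil\}}\ge w_i$. Next I would bound the rank $i$ from above. A piece $M^m_j$ of the ordered $b$-partition of $M_m$ has weight at least $w_i$ exactly when it contains a heavy edge, so $M_m$ contributes $\lceil n_m/b\rceil$ such pieces, where $n_m=|M_m\cap H|$. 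Writing $t$ for the number of matchings that contain a heavy edge, this gives $i\le\sum_m\lceil n_m/b\rceil\le\sum_m(n_m/b+1)=N/b+t\le\lceil N/b\rceil+t$.

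It remains to control $t$, and this is where the structure of the \cite{Lucarelli09} algorithm enters and, I expect, is the main obstacle. Because \algo{Convert} processes the tree in pre-order and, at each vertex, inserts the incident edges in non-increasing weight order, the heavy edges around a vertex $v$ are coloured before any light edge of $v$: a heavy child edge that is the $q$-th heaviest child of $v$ is blocked only by the $q-1$ heavier (hence heavy) children and by the single parent edge, so it receives a colour index at most $q+1\le\deg^H_v+1\le D+1$. Consequently the heavy edges occupy at most $D+1$ distinct matchings, i.e. $t\le D+1$, and a careful treatment of the parent edge (heavy versus light) should remove the extra unit so that $t\le D$. Combining with the previous display gives $i\le\lceil N/b\rceil+D\le 2\max\{D,\lceil N/b\rceil\}$, whence $w_i\le w^*_{\max\{D,\lceil N/b\rceil\}}\le w^*_{\lceil i/2\rceil}$ by monotonicity of the $w^*_j$. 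The delicate points are precisely this off-by-one bookkeeping on $t$ and, as already in Proposition~\ref{prop:bmec_greedy}, the simultaneous exploitation of the degree bound $D$ and the cardinality bound $\lceil N/b\rceil$ on the optimum; once these are settled, the summation above closes the proof.
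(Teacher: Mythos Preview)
Your route to $w_i\le w^*_{\lceil i/2\rceil}$ hinges on $t\le D$, and this is where the argument fails---not as recoverable ``off-by-one bookkeeping'' but as a false inequality. Consider the tree rooted at $r$ with edges $(r,v),(v,c_1),(v,c_2),(c_2,d)$ of weights $1,10,1,10$. \algo{Convert} places $(r,v)\in M_1$; at $v$ the parent blocks $M_1$, so $(v,c_1)\to M_2$ and $(v,c_2)\to M_3$; at $c_2$ the parent blocks $M_3$, so $(c_2,d)\to M_1$. At threshold $10$ the heavy edges $(v,c_1)$ and $(c_2,d)$ are non-adjacent, hence $D=1$, yet they occupy two distinct matchings $M_1,M_2$, so $t=2=D+1$. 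The mechanism is general: a light parent edge can displace a heavy child by one slot without raising the heavy degree anywhere, and such shifts can recur along the tree. Only $t\le D+1$ survives, and plugging that into $i\le\lceil N/b\rceil+t$ gives merely $i\le 2\max\{D,\lceil N/b\rceil\}+1$, hence $w_i\le w^*_{\lfloor i/2\rfloor}$; the resulting sum is $W\le 2\,OPT+w^*_1$, not $2\,OPT$.

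The paper avoids any global bound on $t$. For a fixed $C_j$ with heaviest edge $e$ it splits the preceding colours into $X$ (those that are full \emph{and} contain only heavy edges) and $Y$ (the rest), with $|X|=x$, $|Y|=y$. Counting heavy edges gives $j^*\ge x+1$ directly. For the second half it shows that the $y+1$ colours of $Y\cup\{C_j\}$ arise from $y+1$ distinct matchings (two pieces of the same $M_m$ would force the earlier piece to be full of heavy edges and hence lie in $X$), so $C_j$ comes from some $M_i$ with $i\ge y+1$; applying the Lucarelli adjacency property to this \emph{particular} edge $e\in M_i$ yields $i-1$ pairwise adjacent heavy edges, whence $j^*\ge i-1\ge y$. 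Adding, $2j^*\ge x+y+1=j$. The essential difference from your approach is that the degree-type lower bound is invoked once, at the specific matching hosting $e$, rather than uniformly over all heavy edges; your aggregate parameter $t$ discards precisely the local information that lets the two halves fit together without the extra $+1$.
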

\begin{proof}
Consider the color $C_j$  in the solution $\langle \mathcal{C}
\rangle$ and let $e$ be the heaviest edge in  $C_j$, i.e.,
$w(e)=w_j$. Let $X \subseteq \mathcal{C}_j
=\{C_1,C_2,\dots,C_{j-1}\}$ such that each color $C_p \in X$ has
\emph{(i)} $|C_p|=b$, and \emph{(ii)} all edges of weight at least $w(e)$. Let
also $Y =\mathcal{C}_j\setminus X$, $|X|=x$ and $|Y|=y$. Clearly,
$x+y=j-1$. Let $j^*$ be the number of colors in an optimal
solution of weight at least $w(e)$, that is $w_j=w_{j^*}^*$.

There are at least $x \cdot b+y+1$ edges of weight at least
$w(e)$. These edges in an optimal solution appear in at least
$\lceil\frac{x \cdot b+y+1}{b}\rceil \geq x+1$ colors, that is,
$j^* \geq x+1$.

We show, next, that all colors in $Y \cup \{C_j\}$ come from $y+1$
different colors in $ \mathcal{M}$. Assume that two of these
colors, $C_q$ and $C_r$, come from the ordered $b$-partition of
the same color $M_t \in \mathcal{M}$. Assume, w.l.o.g., that $w_q
\geq w_r$, and let $f$ be the heaviest edge in $C_r$. Note that
$C_r$ may coincide with $C_j$, while $C_q$ cannot. As $C_q \in Y$,
it follows that $|C_q|=b$ and there is an edge $f' \in C_q$ with
$w(f') < w(e) \leq w(f)$, a contradiction to the definition of the
ordered $b$-partition of $M_t$. Therefore, $C_j$ comes from a
color $M_i \in \mathcal{M}$, $i \geq y+1$, that is $e \in M_i$. By
the construction of the coloring $\mathcal{M}$, there are at least
$i-1$ edges, adjacent to each other, of weight at least $w(e)$
(i.e., $i-2$ of them adjacent to $e$ and $e$ itself). These $i-1$
edges appear in different colors in an optimal solution, that is,
$j^* \geq y$.

Combining the two lower bounds for $j^*$ and taking into account
that $x+y=j-1$ we get $j^* \geq \lceil\frac{j}{2}\rceil$.
Therefore, $w_j=w_{j^*}^* \leq w_{\lceil\frac{j}{2}\rceil}^*$ and
summing up the weights $w_j$ of all colors in $\mathcal{C}$ we get
$W = \sum_{j=1}^k w_j \leq 2\sum_{j=1}^{\lceil k/2 \rceil} w_j^*
\leq 2\sum_{j=1}^{k^*} w_j^* \leq 2OPT$, since $k^* \geq \lceil
\frac{k}{2} \rceil$. A tight example for this algorithm, is given
in  \cite{Lucarelli09}, as for large values of $b$ the \bmec
coincides with the  \mec problem. By a careful analysis, the
complexity of both Lines 2 and 6 of the algorithm is $O(n\log n)$,
where $n$ is the number of vertices of the tree.
\end{proof}


\bibliographystyle{plain}
\bibliography{BMC}

\end{document}